\theoremstyle{definition}
\newtheorem{defn}{\protect\definitionname}
\newtheorem{ax}{\protect\axiomname}
\newtheorem*{ax*}{\protect\axiomname}
\newtheorem{ass}{Assumption}
\theoremstyle{plain}
\newtheorem{thm}{\protect\theoremname}
\theoremstyle{plain}
\newtheorem{cor}{\protect\corollaryname}
\theoremstyle{plain}
\newtheorem{prop}{Proposition}
\theoremstyle{definition}
\newtheorem*{example*}{\protect\examplename}
\theoremstyle{plain}
\newtheorem{lem}{\protect\lemmaname}
\providecommand{\axiomname}{Axiom}
\providecommand{\definitionname}{Definition}
\providecommand{\examplename}{Example}
\providecommand{\lemmaname}{Lemma}
\providecommand{\corollaryname}{Corollary}
\providecommand{\theoremname}{Theorem}
\begin{document}
	\title{Correlation Concern}
	\author[Andrew Ellis]{Andrew Ellis}
	\date{May 2021}
	\thanks{The author wishes to thank Michele Piccione, Larry Epstein, and seminar/conference audiences at Bristol, Copenhagen, the Lisbon Meetings, Royal Holloway, Southern Denmark, Toulouse, UCL, and Warwick and  for helpful comments and conversations.
	}
	\begin{abstract}
		In many choice problems, the interaction between several distinct variables determines the payoff of each alternative. I propose and axiomatize a model of a decision maker who  recognizes that she may not accurately perceive the correlation between these variables, and who takes this into account when making her decision. She chooses as if she calculates each alternative's expected outcome under multiple possible correlation structures, and then evaluates it according to the worst expected outcome. 
	\end{abstract}
	\maketitle

	\section{Introduction}
	In many decision problems, the overall payoff of each alternative depends on multiple, distinct variables; for instance, the return of a stock portfolio depends on the return of the underlying stocks. Understanding the correlations between these underlying objects is difficult both conceptually and econometrically.%
\footnote{Throughout, I use the term ``correlation'' interchangeably with the more accurate ``joint distribution.''} Concern about her own (or another agent's) lack of a good understanding of these interdependencies can materially change the behavior of a decision maker (DM). 
	
	An individual may choose an index fund over the corresponding stocks because she does not recognize their connection and is uncertain about the correlation between the stocks. A financial institution may choose a suboptimal loan portfolio in order to pass a stress test that ensures it is not subject to too much systematic risk. A principal may offer a simple contract to ensure that it is robust to the agent's perception of the correlations between the payoffs it offers, the agent's own information, and the private information and actions of other agents. 
	
	I propose and axiomatize a model of a DM who recognizes that she may not accurately perceive the correlation between the underlying sources of uncertainty, and who takes this into account when making her decision. 
	The DM expresses preferences over portfolios of assets whose payoffs all depend on a common state space,  known to the modeler but not necessarily the DM.
	As shown in \cite{EllisPiccione2017} (henceforth, EP), she misperceives correlation when she has a strict preference over two portfolios that always  yield  the same consequence.
	I propose axioms corresponding to a preference for alternatives with payoffs that do not depend on the correlation. The main result shows that a DM's behavior satisfies them if and only if she can be represented as if she considers a set of possible correlation structures between the actions and evaluates each alternative by the worst expected utility in that set.
	
	To illustrate how correlation concern and misperception are identified, consider a DM offered the choice between holding either an S\&P 500 index-tracking fund or a portfolio of the 500 underlying stocks of the S\&P 500 (in the right proportions and without transaction costs).
	When she expresses a strict preference for one or the other, she reveals that she is uncertain about the correlation between the stocks.
	An agent who  recognizes and dislikes her potential misperception opts for the simpler choice of the tracker fund.
	
	The novel \emph{Negative Uncorrelated Independence} axiom captures this concern about correlation in the alternatives. 
	Intuitively, it says that if she prefers an alternative where correlation matters over one where it does not, then introducing potential correlation to both without making either better or worse does not lead to a preference reversal.
	Formally, I capture this by introducing lotteries and weakening the independence axiom. If the DM prefers a stock portfolio $P$ to an individual stock $s$, then she also prefers a lottery between $P$ and a second portfolio $P'$ to a lottery between $s$ and $P'$.
	I also follow EP in imposing a Weak Monotonicity Axiom that requires that whenever one portfolio is better than another for every possible joint distribution over outcomes, that portfolio is preferred. 
	
	These two axioms, jointly with other standard axioms, are necessary and sufficient for the DM's preference to have a \emph{Correlation Concern Representation (CCR)}.
	The representation consists of a set of   joint distributions over the payoffs of assets, each consistent with an underlying distribution when restricted to a given asset. She evaluates each portfolio according to its worst expected utility according to one of these distributions.
	When the DM understands sufficiently rich subsets of assets, then her perception of possible correlation structures can be uniquely recovered from her choices.

	Such a DM is not probabilistically sophisticated, even on the larger state space considered by EP.
	This allows it to capture 
	a number of   behaviors exhibited by a sophisticated agent who recognizes  that she does not understand the interdependencies. For instance, she may refuse to take either side of a trade that she does not understand.
	As noted by \cite{EpsteinHalevy2018}, with expected utility it is impossible that ``awareness of the complexity of her environment
	and self-awareness of her cognitive limitations lead the decision-maker
	to doubt that her wrong beliefs are correct.''
	
	I apply the model to study asset pricing when agents have a CCR.
	The agent understands the connections between the assets she is endowed with, such as those from her domestic stock market, but may misunderstand its connection with other stocks. 
	I show that this can lead to unexploited arbitrage opportunities, even in equilibrium, as well as failure to purchase stocks not in the endowment at a range of prices. The simple application links misunderstanding to several anomalies for rational expectation, subjective expected utility models. In particular, it formally suggests that misunderstanding can play a role in explaining home bias \citep{FrenchPoterba1991}, limited participation \citep{MankiwZeldes1991}, and failure of no-arbitrage conditions \citep{Fleckensteinetal2014}.

    The paper proceeds as follows.
    The next subsection reviews particularly relevant related literature.
    Section \ref{sec:framework} introduces the formal framework.
    Section \ref{sec:axioms} begins with an illustration that probabilistic beliefs rule out reasonable behavior when correlation is misunderstood.
    It then states and discusses the behavioral axioms and other assumptions necessary for identification.
    Section \ref{sec: representation} presents the formal representations and discusses its uniqueness properties and interpretation.
    Section \ref{sec:discussion} discusses how particularly relevant special cases fit into the model and presents the asset pricing application.

	\subsection*{Related Literature}
	There is ample experimental evidence for correlation misperception, including \cite{enkezimmerman2013}, \cite{EysterWeizsacker2010}, 
	\cite{SalantRubinstein2015}, and \cite{HossainOkui2018}.
	Experimental evidence that agents are aware of and concerned about correlation can be found in \cite{EpsteinHalevy2018}. They study an environment with explicit uncertainty about the correlation between two events whose joint realization determines the payoff of a bet. A majority of subjects are inconsistent with a probabilistic model of correlation, and a majority of that majority at least weakly prefer bets that do not depend on correlation.
	Indirect evidence comes from theoretical study of asset markets.
	In this context, \cite{jiang2016correlation}, \cite{liu2017correlation}, and \cite{huang2017limited} consider such a model and show it explains some stylized facts including under-diversification and limited participation in the market.
	
	\cite{EpsteinSeo2010, EpsteinSeo2015} explore axiomatically the consequences of introducing ambiguity in the classic exchangeable model of de Finetti. An exogenous product state space describes the outcome of a sequence of experiments that are indistinguishable and possibly related but not identical. They provide a model where the DM perceives ambiguity about the relationship between experiments. The functional form, especially in the 2010 paper, is similar to the one I consider, but acts depend exogenously on a collection of experiments.

	\cite{EpsteinHalevy2018} consider a related model. They argue that, in the setting above, an ambiguity averse agent may prefer bets on only a single experiment to bets that depend on multiple experiments. As noted above, they conduct an experiment that confirms that a number of subjects have this preference.
	
		\cite{heo2020}	provides a different perspective on  DM averse to acts that depend on different components of an (objectively given) product state space, or issues. He argues that such a DM may strictly prefer an act that depends on a single issue to a mixture of that act with an equally good act that depends on a different issue. As a consequence, she may violate the classic uncertainty aversion axiom. 
		Adapted to this setting, NUI requires that a DM who prefers a multi-issue act to a single issue act does not reverse that preference when both are mixed with a common third act. Moreover, a DM with a correlation concern representation satisfies the uncertainty aversion axiom, though we do not explicitly impose it. These approaches provide complementary perspectives on the issue.
	
To apply the above papers that take a product state as given, the modeler must know what the DM perceives the state space to be and observe her ranking of acts on this state space. In settings where she misperceives her options, this may be more difficult. For instance in the thought experiment  in Section \ref{sec: illustration}, their model would imply that the DM can be observed to rank bets on impossible events like ``temperature is greater than $0^\circ C$ but lower than $32^\circ F$.''		
	
	Concern for for robustness has other applications in mechanism design. For instance, \cite{Carroll2017} considers a seller uncertain about the correlation between the buyer's values of different goods that can be bundled. The CCR captures the behavior of such a principal nicely. 	Another strand of the behavioral mechanism design literature  focuses instead on a principal concerned that the agent does not correctly understand the game that she is playing. Most notably, \cite{Li2017}  considers obviously strategy proof mechanisms: mechanisms played correctly even agents who do not understand the relationship between the other agents' actions and information with her own payoff.
	This is conceptually connected to the results herein, but in the CCR an agent evaluates each mechanism with the ``worst'' beliefs about the relationship to others. One can model this by considering incomplete preferences and maintaining independence instead of maintaining completeness and NUI.

	\cite{LevyRazin2015amb} and \cite{Laohakunakornetal2018} consider an agent who is exposed to information from multiple sources. She considers all priors ``close enough'' to a benchmark when making her decision. As in this paper, she consider the worst of these priors when evaluating acts. The benchmark is full independence .
	

	\section{Primitives}
	
	\label{sec:framework} There is a set $\mathcal{A}$ of \emph{actions}, with
	typical elements $a,a_{i},b,b_{i}$. Each action results in an \emph{outcome} or consequence in
	$X=\mathbb{R}$, with typical elements $x,y,z$. This outcome is determined by a
	\emph{state of the world} drawn from the finite set $\Omega$, with typical elements $\omega,\omega'$. I interpret the state space $\Omega$ as a description of the
	``objectively possible'' joint realizations
	of the outcomes of any set of actions, against which the DM's subjective
	perceptions of joint realizations are evaluated.
	
	A map $\rho:\mathcal{A}\times\Omega\rightarrow X$ describes the relationship
	between actions, states, and outcomes, with the action $a$ yielding the
	outcome $\rho(a,\omega)$ in state $\omega$.  The set $\mathcal{A}$ includes every Savage act, i.e. for any $f:\Omega \rightarrow X$ there is an action yielding the outcome $f(\omega)$ in state $\omega$ for every 
	$\omega \in \Omega$. Slightly abusing notation, 
	write $a(\omega)$ for $\rho(a,\omega)$ and $x$ for an action that yields $x\in X$ in every state. 
	
	The set $\mathcal{F}$ of \emph{action
		profiles} (or profiles) is such that each element is a finite vector of	actions for which the order does not matter -- i.e., a multiset of actions.
	A profile that consists of taking
	the $n$ actions $a_{1},...,a_{n}$ is denoted $\langle a_{1},...,a_{n}\rangle$
	or $\langle a_{i}\rangle_{i=1}^{n}$. To save notation, the
	range of indexes is omitted when the number of actions is unimportant, i.e.  $\langle
	a_{i}\rangle$ instead of $\langle a_{i}\rangle_{i=1}^{n}$. An agent who
	chooses the profile $\langle a_{i}\rangle_{i=1}^{n}$ receives the outcomes of
	all $n$ actions $a_{1},...,a_{n}$, that is, she receives $\sum_{i=1}^{n}%
	a_{i}\left(  \omega\right)  $ is state $\omega$.
	
	The DM chooses by maximizing a preference relation $\succsim$ over probability distributions on
	$\mathcal{F}$ having finite support, the set of which is denoted by
	$\Delta\mathcal{F}$. A typical
	element of $\Delta\mathcal{F}$ is $p=\left(  p_{1},\langle a_{i}^{1}%
	\rangle;...;p_{m},\langle a_{i}^{m}\rangle\right)  $, interpreted as the
	lottery where profile $\langle a_{i}^{j}\rangle$ occurs with probability
	$p_{j}$.
	As usual, the symbol $\sim$ denotes indifference and $\succ$
	strict preference. 
	The set of lotteries over actions, $\Delta \mathcal{A}= \{p\in \Delta \mathcal{F}:p(\langle a_i \rangle_{i=1}^n)>0 \text{ only if }n=1\}$, plays an important role in the axioms, and includes as a subset  lotteries over outcomes.
	 It will be convenient to write $p(\langle a_{i}\rangle)>0$ for the set of profiles $\langle
	a_{i}\rangle$ in the support of $p$. Given $p,q\in\Delta\mathcal{F}$, a mixture $\alpha
	p+(1-\alpha)q$, $\alpha\in\lbrack0,1]$, is the lottery in $\Delta\mathcal{F}$
	in which the probability of each profile in the support of $p$ and $q$ is
	determined by compounding the probabilities in the obvious way.

	Endow $\Delta\mathcal{F}$ with the weak* topology for the space $\mathcal{F}$ endowed with metric $d$ defined as follows. Let $X^*=\{ \langle x \rangle: x \in X \}$. The metric $d$ is discrete on $\mathcal{F} \setminus X^*$ and agrees with the usual metric on $X$ on $X^*$. Formally, for any $\langle a_{i}\rangle_{i=1}^{n},\langle b_{i}\rangle_{i=1}^{n'} \in \mathcal{F}$, 
	$d(\langle a_{i}\rangle_{i=1}^{n},\langle b_{i}\rangle_{i=1}^{n'})=1$ when  $ \left\{ \langle a_{i}\rangle_{i=1}^{n} ,\langle b_{i}\rangle_{i=1}^{n'} \right\} \not \subset X^*$ and $\langle a_{i}\rangle_{i=1}^{n}\neq \langle b_{i}\rangle_{i=1}^{n'}$, $d(\langle a_{i}\rangle_{i=1}^{n},\langle b_{i}\rangle_{i=1}^{n'})=0$ if $\langle a_{i}\rangle_{i=1}^{n}=\langle b_{i}\rangle_{i=1}^{n'}$, and  $d(\langle x \rangle, \langle y \rangle)=  |x-y|$. According to $d$, a sequence of profiles converges only if it is eventually constant, or every profile therein is a single, constant outcome, and the sequence of outcomes converges.%
	\footnote{A sequence of profiles $(F_n)$ $d$-converges to the profile $ G$ only if there exists $N$ so that either $F_n=G$ for all $n>N$ or $F_n=\langle x_n \rangle $ for all $n>N$, $G=\langle y \rangle$,  and the sequence $(x_n)_{n>N}$ approaches $y$ in the usual sense. 
	}

	\section{Foundations}
	
	\label{sec:axioms}
	This section begins by presenting a thought experiment illustrating the novel behavior the model captures. It then introduces and discusses the axioms on the DM's preference relation invoked by the main result. Finally, the assumptions necessary for the identification result are introduced.
	\subsection{Illustration of behavior }
	\label{sec: illustration}
	Consider a DM  choosing between bets that depend on $\tau$,
	tomorrow's high temperature. The DM can have either $\$100$ or the sum of the
	outcomes of bets $b_{C}$ and $b_{F}$, where $b_{C}$ pays $\$100$ if $\tau$ is
	less than $30$ degrees Celsius ($\$0$ otherwise) and $b_{F}$ pays $\$100$ if
	$\tau$ is at least $86$ degrees Fahrenheit ($\$0$ otherwise). 
	On another occasion with the same weather forecast, the DM must choose between $b$, which pays pays $\$100$ if $\tau$ is
	less than $30$ degrees Celsius and $-\$100$ otherwise, and the combination of $b_C$ and $-b_F$, which is a ``short'' position on $b_F$ that  pays $-\$100$ if
	$\tau$ is at least $86$ degrees Fahrenheit ($\$0$ otherwise).
	Formally, the DM makes a choice from each of the sets  $\{\langle 100 \rangle,\langle b_C, b_F \rangle\}$ and $\{\langle b \rangle,\langle b_C, -b_F \rangle\}$.
	As $30^{\circ}$
	Celsius equals $86^{\circ}$ Fahrenheit, a DM who knows this and easily
	converts Fahrenheit to Celsius expresses indifference in both choices. 
	However, a DM who does not know exactly how to convert from one unit to the other may not exhibit such indifference and
	reasonably express $\langle 100 \rangle \succ \langle b_C, b_F \rangle$ and $\langle b \rangle \succ \langle b_C, -b_F \rangle$.

	A probabilistic approach to correlation can capture only one of the two preferences. For instance, a risk-averse DM may express $\langle 100 \rangle \succ \langle b_C, b_F \rangle$ because she believes that $b_C$ and $b_F$ may be positive correlated and thus riskier than $\$100$ for sure. However, that same DM believes that $b_C$ and $-b_F$ are negatively correlated and thus less risky than $b$.
	Hence under expected utility, $\langle 100 \rangle \succ \langle b_C, b_F \rangle$ implies $\langle b_C, -b_F \rangle \succ \langle b \rangle$.
	The next section outlines behavior consistent with the thought experiment and equivalent to a representation  of a DM concerned about correlation.
	
	\subsection{Basic Axioms}
	
	The first set of assumptions are  either invoked by EP or closely related. The first two are standard.
	
	\begin{ax}
		[Weak order, WO]\label{ax: WO}\label{ax: first} The preference relation $\succsim$ is complete and transitive.
	\end{ax}
	\begin{ax}
		[Continuity, C]\label{ax: Cont} 
		For any sequences $p_n,q_n \in \Delta(\mathcal{F})$,\\ if $p_n \rightarrow p$, $q_n \rightarrow q$, and $p_n \succsim q_n$ for all $n$, then $p \succsim q$.
	\end{ax}
	The DM can compare any pair of lotteries over action profiles, and her pairwise comparisons are sufficiently consistent with each other to form an ordering. 
	Moreover, the ranking is sufficiently continuous. 
	According to the topology introduced in Section \ref{sec:framework}, it is continuous in probability (and only probability) when the lotteries involve profiles that have non-constant payoffs, but it is continuous in the usual sense when restricting to lotteries over constant payoffs. 
	In particular $(1,\langle x_n \rangle ) \rightarrow (1,\langle x \rangle )$ whenever $x_n \rightarrow x$.
	
	As in EP, misperception of correlation occurs whenever the DM violates Monotonicity. In this context, Monotonicity holds if $$\left(  p\big(\langle a_{i}\rangle_{i=1}^{n}\big),\langle\sum_{i=1}%
	^{n}a_{i}(\omega)\rangle\right)  _{p\left(  \langle a_{i}\rangle\right)  >0} \succsim \left(  q\big(\langle a_{i}\rangle_{i=1}^{n}\big),\langle\sum_{i=1}%
	^{n} a_{i}(\omega)\rangle\right)  _{q\left(  \langle a_{i}\rangle\right)  >0}$$ for every $\omega \in \Omega$ implies that $p \succsim q$. 
	EP suggest the following weakening. It allows the behavior in the thought experiment, yet rules out other violations that cannot be attributed to misperception of correlation, such as expressing $\langle 49,50 \rangle \succ \langle 100 \rangle$.
	
	Stating the axiom formally requires some notation. For any finite subset of actions $\{c_{1},...,c_{n}\}=C\subset
	\mathcal{A}$, the set of all \emph{plausible realizations} of $C$ equals
	\[
	range(c_{1})\times range(c_{2})\times...\times range(c_{n}).
	\]
	A vector of outcomes $\vec{x}$ is a \emph{plausible realization of the lotteries
		$p$ and $q$} if  it is a plausible realization of the set of all the actions
	included in profiles that are assigned positive probability by either $p$ or
	$q$.
	For a plausible realization $\vec{x}$ of $p$ and $q$, \emph{$p$ induces
		the lottery 
	\[p_{\vec{x}}=
	\left(  p\big(\langle a_{i}\rangle_{i=1}^{n}\big),\langle\sum_{i=1}%
	^{n}x^{a_{i}}\rangle\right)  _{p\left(  \langle a_{i}\rangle\right)  >0}%
	\]}
	in which the constant action yielding the outcome $\sum_{i=1}^{n}x^{a_{i}}$
	has probability $p\left(  \langle a_{i}\rangle_{i=1}^{n}\right)  $. 
	
	\begin{ax}
		[Weak Monotonicity, WM]\label{ax: PD} For any $p,q\in\Delta\mathcal{F}$,\\
		if	$p_{\vec{x}} \succsim  q_{\vec{x}}$ (respectively, $p_{\vec{x}} \succ  q_{\vec{x}}$) for every plausible realization $\vec{x}$
		of $p$ and $q$, then $p\succsim q$ (respectively, $p\succ q$). 
	\end{ax}

	In words, if the DM prefers the lottery induced by $p$ better than to that
	induced by $q$ for any of their plausible realizations, then she prefers $p$
	to $q$.
	To illustrate, consider three actions, $a,b,c$, and consider a DM who must choose between the profile containing $a$ and $b$, denoted $\langle a,b \rangle$, and the one containing only $c$, denoted $\langle c \rangle$. In this case, if the minimum payoff of action $a$ added to the minimum payoff of $b$ exceeds the maximum payoff of $c$, then $\langle a,b \rangle$ is preferred to $\langle c \rangle$. This implies that $\langle 49,50 \rangle \succ \langle 100 \rangle$, but it \emph{does not require} that  $\$100$ for sure is preferred to the combination of $b_C$ and $b_F$ in the thought experiment, since the latter could also yield $\$200$ (or $\$0$).
	For additional interpretation and discussion, see EP.
	
	Finally, the DM makes comparisons among actions without difficulty.
	\begin{ax}[Simple Monotonicity, SM]
		For any $a,b \in \mathcal{A}$ , \\
		if $a(\omega)  \geq b(\omega)$
		for all $\omega \in \Omega$, then $\langle a \rangle  \succsim \langle b\rangle $, strictly whenever the inequality is strict for each state.
	\end{ax}
	
	This is a standard monotonicity condition, but only applied to profiles consisting of a single action.
	These profiles correspond to (standard) Savage acts.
	Restricted to these objects the DM behaves as a standard utility maximizer.
	
	\subsection{Preference for Avoiding Correlation}
	I now introduce the key axiom that reflects a DM who dislikes exposure to correlation. It weakens the Independence Axiom, which holds that for any $p,q,r \in \Delta \mathcal{F}$ and $\alpha \in (0,1]$, $p \succsim q$ implies that $ \alpha p +(1-\alpha)r \succsim \alpha q +(1-\alpha)r$.
	
	\begin{ax}[Negative Uncorrelated Independence, NUI]
	\label{ax: NUI}
	\label{ax: last}
		For any $p,q,r\in\Delta\mathcal{F}$ and $\alpha \in (0,1]$:\\
		if $p \succsim q$ and  $q \in \Delta \mathcal{A}$, then $\alpha p +(1-\alpha)r \succsim \alpha q +(1-\alpha)r$.
	\end{ax}
	
	Relative to the usual axiom, NUI requires that mixing does not reverse preference only when the worse of the lotteries attaches zero probability to non-singleton profiles.
	To illustrate, suppose that $\langle a,b \rangle$ is indifferent to $\langle c \rangle$. While the DM's evaluation of $\langle a,b \rangle$ depends on her perception of the correlation between the two stocks, $\langle c \rangle$ does not involve any correlation at all. Hence, the indifference reflects that the absence of exposure to correlation by $\langle c \rangle$ exactly offsets a better expected outcome from $\langle a,b \rangle$.
	After mixing both with an arbitrary lottery $r$ over profiles, evaluating both alternatives requires computing correlation. Moreover, neither alternative becomes objectively better: the mixture changes the expected utility of both profiles in the same way for a given correlation between actions  by standard usual independence axiom arguments. Nonetheless, the ``simplicity'' advantage that $\langle c \rangle$ had is lost. 
	A DM who dislikes thinking about correlations should then prefer the mixture of $\langle a,b \rangle$ and $r$ to the mixture of $\langle c \rangle$ and $r$.
	
	NUI, and the other axioms, implies that violations of independence do not occur when comparing lotteries over actions. Moreover, any violation of independence favors a lottery over profiles. Fixing a correlation structure, the mixing with a common third profile does not make either alternative ``differentially better'' using the standard logic behind the Independence Axiom.
	However, it can hedge against the possibility of different correlation structure being realized. As with the standard justification of the independence axiom, there is no expected utility based reason to reverse preference after mixing. 
	
	A similar logic to NUI underlies axioms that appear in \cite{GMMS2010}, \cite{Dillenberger2010}, and \cite{Cerreia-Vioglioetal2015}, with certain alternatives playing the role of uncorrelated profiles. The first assumes that the DM defaults to certainty: if an act is not objectively better than a lottery, then subjectively it should not be better. The other two assume that the DM  prefers sure outcomes: if a lottery is not objectively better than a sure outcome, then a mixture of the lottery is better than a mixture of the sure outcome.

	\subsection{Understanding and Richness}
	The previous axioms are necessary and sufficient for the representation theorem.
	However, additional assumptions allow for unique identification of a ``coarsest'' state space and a unique set of beliefs.
	This subsection presents these definitions, which also appear in EP.
	I defer to that paper for a detailed discussion.
	
	The definition of understanding extends the logic of Weak Monotonicity. A DM
	who perceives the correlations within a subset of the actions correctly rules
	out some plausible realizations. Specifically, if she understands the
	correlations of the actions in a set $C$, then she should consider irrelevant
	any plausible realization of $p$ and $q$ that fails to ``synchronize'' the outcomes for the actions in $C$ as for the
	joint occurrences that are determined by $\Omega$. That is, she should only
	consider a plausible realization $\vec{x}$ if there exists $\omega\in\Omega$
	such that $x^{a}=a(\omega)$ for all $a\in C$;  call any such plausible
	realization \emph{$C$-synchronous}.
	
	\begin{defn}
		\label{def:understand}The preference $\succsim$ \emph{understands}
		$C\subseteq\mathcal{A}$ if for any $p,q\in\Delta\mathcal{F}$, $p\succsim q$
		whenever $p_{\vec{x}}\succsim q_{\vec{x}}$ for all $C$-synchronous plausible
		realizations $\vec{x}$ of $p$ and $q$.
	\end{defn}
	
	In order to identify the main parameters of the representations, I 
	assume that each action belongs to a suitably diverse, understood set of
	actions. A sufficient condition is that this understood set is rich, defined
	as follows.
	
	\begin{defn}
		A set $B\subset\mathcal{A}$ is \emph{rich} if for any function $f:\Omega\rightarrow X$, there exists $a\in	B$ with $a(\omega)=f(\omega)$ for all $\omega \in \Omega$.
	\end{defn}
	
	This is a slightly stronger definition than that used by EP which held that a set is rich if it contains every function measurable with respect to a given algebra. 
	I can now state the assumption.
	
	\begin{ass}[Non-Singularity] Each $a \in\mathcal{A}$ belongs to a rich, understood subset
		of actions.
	\end{ass}
	
	Non-Singularity is in the traditional \cite{Savage1954} assumption that the
	domain of preference contains all possible acts. It is a joint assumption on
	both the preference $\succsim$ and the set $\mathcal{A}$.
	It ensures that understood sets of
	
	\section{Representation}
	\label{sec: representation}
	\subsection{Correlation Concern Representation}
	A DM who misperceives correlation perceives additional uncertainty beyond that captured by $\Omega$.
	To capture it, I follow EP by  replacing the objective state space $\Omega$ with a subjective state space rich enough to capture the DM's perception of uncertainty. To do so, let
	$\Omega^{\mathcal{A}}=\prod_{a\in\mathcal{A}}\Omega$ be the Cartesian product
	where one copy of $\Omega$ is assigned to each action in $\mathcal{A}$,
	$\Sigma^{\mathcal{A}}=\otimes_{a\in\mathcal{A}}\sigma\left(  a\right)  $ be
	the product $\sigma$-algebra on $\Omega^{\mathcal{A}}$ where $\sigma(a)$ is the smallest algebra by which $a$ is measurable, $\Omega^{a}$ be the
	copy of $\Omega$ assigned to $a\in\mathcal{A}$, and for any $\vec{\omega}%
	\in\Omega^{\mathcal{A}}$, $\omega^{a}$ be the component of $\vec{\omega}$ in
	$\Omega^{a}$.
	The DM is represented as if she considers events in the larger state space $\Omega^{\mathcal{A}}$.
	Every $\vec{\omega}\in\Omega^{\mathcal{A}}$ determines a joint realization of
	the outcomes of the corresponding actions, with $a$ yield $a(\omega^a)$, $b$ yielding $b(\omega^b)$, and so on. Hence, all additional uncertainty
	corresponds to the perception of correlations. 
	\begin{defn}
	\label{def: CCR}
		A preference $\succsim$ has a \emph{Correlation Concern Representation (CCR)} if
		there exists
		\vspace{-.3cm}
		\begin{itemize}
			\item a continuous, strictly increasing 
			utility index \(u:X \rightarrow \mathbb{R}\),
			\item 	a  probability measure $\mu$ on $\Omega$, 
			\item and a set of finitely additive probability measures \(\Pi\) on
			\(
			\left(\Omega^{\mathcal{A}},\Sigma^{\mathcal{A}} \right) \) whose marginals agree with $\mu$: for any \(a \in \mathcal{A}\) and all \(\pi \in \Pi\),
			\[\int_{\Omega^{\mathcal{A}}} u(a(\omega^a)) d\pi =\int_{\Omega} u(a(\omega)) d\mu\]
		\end{itemize} 
		\vspace{-.3cm}
		\noindent
		such that for any \(p,q \in \Delta \mathcal{F}\),
		\(p\succsim q \iff V(p) \geq V(q)\) where
		\[V \left( p \right) = \min_{\pi \in \Pi}    \int_{\Omega^\mathcal{A}} \mathbb{E}_{p(\langle a_i \rangle)} \left[ u \left(\sum_{i=1}^n   a_i(\omega^{a_i}) \right) \right] d\pi. \]
	\end{defn}

	While she acts as if she maximizes  expected utility with probability measure  $\mu$ when comparing individual actions, she does not when comparing profiles. When evaluating them, she considers a set of possible joint distributions possible, represented by the set $\Pi$. A lottery over profiles is evaluated by its expectation according to the measure that minimizes the resulting expected utility, as in \cite{GilboaSchmeidler1989}. Consequently, she acts as if she is averse to uncertainty about correlations but not about the returns of individual actions.
	Without a subjective state space, the model has a number of precedents.
	Most notably, \cite{EpsteinSeo2010,EpsteinSeo2015} use non-expected utility models with an objective product state space as above to capture ambiguity about the relationship between distributions of different variables.
	In contrast,  the mapping between the objects of choice and the product state space is subjective and derived from the representation.

	The axioms introduced above are necessary and sufficient for the DM's behavior to be represented by a CCR.

		\begin{thm}
		\label{thm:basic_rep}
		The preference $\succsim$ satisfies  Weak Order, Continuity, Weak Monotonicity, Simple Monotonicity, and Negative Uncorrelated Independence if and only if it has a correlation concern representation.
	\end{thm}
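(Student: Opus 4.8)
The plan is to verify necessity by direct computation and to prove sufficiency in two layers: an Anscombe--Aumann/SEU layer that pins down $u$ and $\mu$ on single actions, and a Gilboa--Schmeidler-style worst-case layer that produces $\Pi$ on profiles. Throughout write $\phi_p(\vec\omega)=\mathbb{E}_{p(\langle a_i\rangle)}[u(\sum_i a_i(\omega^{a_i}))]$ and $L_\pi(p)=\int_{\Omega^{\mathcal{A}}}\phi_p\,d\pi$, so that a CCR reads $V(p)=\min_{\pi\in\Pi}L_\pi(p)$. For \emph{necessity}, each $L_\pi$ is affine in $p$ (since $\phi$ is), so $V$ is a real-valued concave functional, giving Weak Order; continuity in the metric $d$ follows from continuity of $u$ and the fact that off $X^{*}$ the topology is discrete, so only probability weights and converging constant outcomes matter. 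The marginal condition gives $V(\langle a\rangle)=\int_\Omega u(a)\,d\mu$, ordinary SEU, whence Simple Monotonicity. The hypothesis of Weak Monotonicity, $p_{\vec x}\succsim q_{\vec x}$ at every plausible realization, is exactly $\phi_p\ge\phi_q$ pointwise, so $L_\pi(p)\ge L_\pi(q)$ for all $\pi$ and $V(p)\ge V(q)$ (strictly, since $\phi_p-\phi_q$ takes finitely many values). Finally, for $q\in\Delta\mathcal{A}$ the value $L_\pi(q)$ is $\pi$-independent, so $V(\alpha q+(1-\alpha)r)=\alpha V(q)+(1-\alpha)V(r)$, while concavity gives $V(\alpha p+(1-\alpha)r)\ge\alpha V(p)+(1-\alpha)V(r)\ge\alpha V(q)+(1-\alpha)V(r)$ whenever $V(p)\ge V(q)$; this is exactly NUI.

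\emph{Sufficiency, SEU layer.} Since constant outcomes lie in $\Delta\mathcal{A}$, NUI restricted to $\Delta X$ is the full Independence axiom there, so WO, C and SM yield a continuous, strictly increasing vN--M index $u$ on $X$. Because any element of $\Delta\mathcal{A}$ is admissible as the ``worse'' lottery in NUI, applying NUI in both directions upgrades it to full Independence on the mixture set $\Delta\mathcal{A}$. As $\mathcal{A}$ contains every Savage act, $\Delta\mathcal{A}$ is, up to a density argument in the weak\* topology, an Anscombe--Aumann domain, and the standard argument produces a probability $\mu$ on $\Omega$ with $V(\langle a\rangle)=\sum_{\omega}\mu(\omega)u(a(\omega))$.

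\emph{Sufficiency, worst-case layer.} The function $\phi_p$ is affine in $p$, depends on finitely many coordinates, and its value at $\vec\omega$ is the vN--M value of the induced lottery $p_{\vec x}$ at the realization $\vec x$ determined by $\vec\omega$. By Weak Monotonicity in both directions, $\phi_p=\phi_{p'}$ implies $p\sim p'$, so $V$ factors as $V(p)=I(\phi_p)$ for a functional $I$ on the utility acts $\{\phi_p\}$ that is monotone and internal (by WM) and concave (by NUI: take a constant certainty equivalent $\langle c\rangle\sim p$ with $\langle c\rangle\in\Delta\mathcal{A}$, apply NUI, and invoke internality). Let $\mathcal{P}$ be the finitely additive probabilities on $(\Omega^{\mathcal{A}},\Sigma^{\mathcal{A}})$ whose marginals all equal $\mu$; each induces an affine $L_\pi$ agreeing with $V$ on $\Delta\mathcal{A}$. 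Setting $\Pi=\{\pi\in\mathcal{P}:L_\pi\ge V\}$, the target is $V=\min_{\pi\in\Pi}L_\pi$: the inequality ``$\le$'' is definitional, and for ``$\ge$'' I would, at each $p$, separate the hypograph of the concave $I$ to obtain a supporting functional and show via monotonicity and normalization that it is a probability measure; identifying it as an element of $\Pi$ (marginals $\mu$, zero intercept) is the delicate point.

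The main obstacle is exactly that last identification. A supporting functional to a concave $I$ is \emph{affine}, $\ell=L_\nu+b$, and monotonicity alone permits a nonzero intercept $b\ge 0$, which would yield only the more general variational/uncertainty-averse representation $\min_\pi[L_\pi(p)+c(\pi)]$ rather than a CCR. Ruling out the intercept---equivalently, showing $V$ is positively homogeneous toward a normalized $\langle 0\rangle$---is precisely the work done by NUI beyond concavity: because NUI ties the unmixed comparison $p\succsim q$ to the mixed comparison at an \emph{arbitrary} weight $\alpha$ (as in Gilboa--Schmeidler certainty independence, with $\Delta\mathcal{A}$ playing the role of the constants), no admissible prior may value an act strictly below a $\Delta\mathcal{A}$-element indifferent to it, which forces $b=0$ and, together with the reflection argument (replace $a$ by $\tilde a$ with $u(\tilde a)=2k-u(a)$, available by richness), pins every marginal to $\mu$. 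A preference carrying a genuine cost term satisfies all the other axioms yet violates NUI, confirming that this is where NUI bites. The remaining difficulty is purely functional-analytic: since $\mathcal{A}$ is uncountable the relevant measures are only finitely additive on the product $\sigma$-algebra, so the separation must be executed in $ba(\Omega^{\mathcal{A}},\Sigma^{\mathcal{A}})$ and $\Pi$ taken weak\* closed so that the minimum is attained.
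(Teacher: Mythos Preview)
Your architecture matches the paper's---an Anscombe--Aumann layer on $\Delta\mathcal{A}$ for $(u,\mu)$, then a Gilboa--Schmeidler layer on the utility acts $\phi_p$ for $\Pi$---and the necessity argument is fine. Two genuine gaps remain in sufficiency.

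The more serious one is homogeneity. You treat NUI as ``Gilboa--Schmeidler certainty independence with $\Delta\mathcal{A}$ playing the role of the constants,'' but the structures differ: C-independence constrains the \emph{mixing} element to be constant, whereas NUI constrains the \emph{dominated} element $q$ to lie in $\Delta\mathcal{A}$, and the implication runs only one way. To get $I(\alpha\phi_p)=\alpha I(\phi_p)$ you need $\alpha p+(1-\alpha)\langle 0\rangle\sim\alpha c+(1-\alpha)\langle 0\rangle$ for the certainty equivalent $c\sim p$; NUI delivers the $\succsim$ half (since $c\in\Delta\mathcal{A}$ is the dominated lottery), but the reverse direction would require starting from $c\succsim p$ with $p\notin\Delta\mathcal{A}$, where NUI says nothing. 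The paper closes this with its Lemma~2, upgrading NUI to the two-sided statement that mixing with any $r\in\Delta\mathcal{A}$ preserves the ranking of \emph{arbitrary} $p,q$; the proof is a nontrivial Shapley--Baucells iteration on $\tau=\sup\{\beta:\beta q+(1-\beta)r\succsim\beta p'+(1-\beta)r\}$, forcing $\tau\in\{0,1\}$. Without this step you obtain only concavity and hence at best a variational form $\min_\pi[L_\pi+c(\pi)]$; your assertion that ``no admissible prior may value an act strictly below a $\Delta\mathcal{A}$-element indifferent to it'' presupposes exactly the two-sided independence you have not yet established, so it cannot by itself force $c\equiv 0$.

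The second gap is the domain of $I$. The image $W=\{\phi_p:p\in\Delta\mathcal{F}\}$ is a proper convex subset of $B_0(\Omega^{\mathcal{A}},\Sigma^{\mathcal{A}})$; not every simple measurable function on the product space arises as some $\phi_p$. Separating the hypograph of $I$ on $W$ yields a linear functional only on the span of $W$, and identifying it as integration against a finitely additive probability with marginals $\mu$ requires either an argument that $W$ is rich enough or a prior extension of $I$ to all bounded measurable functions with the relevant properties intact. The paper takes the second route, defining $I(x)=\sup\{\tilde I(w):w\le x,\ w\in W\}$, checking that monotonicity, concavity, homogeneity and action-linearity all survive, and only then separating---with the marginal constraint built directly into the separated set $D_2$ (which contains every normalized $\Delta\mathcal{A}$-act alongside $\phi/I(\phi)$) rather than recovered afterward by a reflection trick.
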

	
		The properties of the representation relate naturally to the axioms imposed. Action Monotonicity and NUI imply that the DM acts a standard subjective expected utility maximizer  when dealing with single action profiles.
		Weak Monotonicity allows the DM to misunderstand correlation between actions, as captured when the representation has a $\pi \in \Pi$ so that $\pi(\omega^a\neq \omega^b)>0$ for some $a,b \in \mathcal{A}$. 
		It nonetheless implies that the DM ignores ``implausible'' outcomes, so each probability measures in the set $\Pi$ attaches zero probability to such outcomes, e.g. $\langle b_C,b_F \rangle$ yielding $\$300$ or $-\$400$, as captured by the subjective state space $\Omega^\mathcal{A}$.  The representation captures recognition and dislike that the DM does not understand correlation by allowing $\Pi$ to be a non-singleton set. As shown by Corollary \ref{cor: EP Thm 1}, this follows from NUI.
		
		\begin{cor}[\cite{EllisPiccione2017}]
	\label{cor: EP Thm 1}
	The preference \(\succsim\) satisfies Weak Order, Continuity, Weak Monotonicity, Simple Monotonicity, and Independence if and only if 
it has a CCR where $\Pi$ is a singleton.
	\end{cor}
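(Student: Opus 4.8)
The plan is to derive both directions from Theorem~\ref{thm:basic_rep}, using the single observation that the full Independence axiom is exactly NUI with the restriction ``$q \in \Delta\mathcal{A}$'' deleted, so Independence implies NUI. The sufficiency (``if'') direction is then immediate: if $\succsim$ has a CCR with $\Pi = \{\pi\}$ a singleton, the outer minimization is vacuous and
\[ V(p) = \int_{\Omega^{\mathcal{A}}} \mathbb{E}_{p(\langle a_i\rangle)}\Big[u\Big(\sum_i a_i(\omega^{a_i})\Big)\Big]\, d\pi \]
is affine under mixing of lotteries, since both the inner expectation $\mathbb{E}_{p(\langle a_i\rangle)}$ and integration against $\pi$ are linear in $p$. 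An affine representation always induces a preference satisfying Independence, while Weak Order, Continuity, Weak Monotonicity, and Simple Monotonicity hold because every CCR satisfies them by Theorem~\ref{thm:basic_rep}. Hence all five axioms, now with Independence replacing NUI, are verified.

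For the necessity (``only if'') direction, suppose $\succsim$ satisfies Weak Order, Continuity, Weak Monotonicity, Simple Monotonicity, and Independence. Since Independence implies NUI, Theorem~\ref{thm:basic_rep} supplies a CCR with some $(u,\mu,\Pi)$, and $V = \min_{\pi \in \Pi} L_\pi$, where each $L_\pi$ is affine in $p$, so $V$ is concave. The remaining task is to show $\Pi$ can be taken to be a singleton. I would first invoke the mixture-space (Herstein--Milnor) theorem: Weak Order, Independence, and the mixture continuity delivered by Axiom~\ref{ax: Cont} furnish an affine representation $W$ of $\succsim$ on $\Delta\mathcal{F}$, unique up to a positive affine transformation. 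On lotteries over single actions, Simple Monotonicity together with affinity and continuity pins $W$ down as subjective expected utility, $W(\langle a\rangle) = \sum_{\omega} \mu(\omega) u(a(\omega))$, recovering the same $u$ and $\mu$ after normalization.

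It then remains to represent $W$ on all of $\mathcal{F}$ as a single expectation over the product space. Because $W$ is affine it is determined by its values on profiles, and because $\mathcal{A}$ contains every Savage act, the maps $\vec\omega \mapsto u\big(\sum_i a_i(\omega^{a_i})\big)$ range over a rich family of bounded measurable functions on $(\Omega^{\mathcal{A}},\Sigma^{\mathcal{A}})$. Weak Monotonicity makes $W$ a monotone functional on this family and affinity makes it positive and linear, so a finitely additive representing measure $\pi^*$ on $\Sigma^{\mathcal{A}}$ exists by the standard integral-representation argument; the marginal identity $\int u(a(\omega^a))\,d\pi^* = \int u(a(\omega))\,d\mu$ follows from the single-action computation of the previous step. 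Setting $\Pi = \{\pi^*\}$ yields a CCR whose functional is $L_{\pi^*} = W$, which represents $\succsim$, i.e.\ a CCR with singleton $\Pi$.

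The main obstacle is this last construction: extracting a single finitely additive measure $\pi^*$ on the large product space $\Omega^{\mathcal{A}}$ from the affine functional $W$ while simultaneously enforcing Weak Monotonicity---which confines $\pi^*$ to the ``plausible,'' product-consistent realizations and rules out mass on implausible joint outcomes---and the marginal-matching condition. This is precisely the single-belief case isolated in EP, so I expect to cite their construction directly or mirror it; the delicate points are the finite (rather than countable) additivity forced by the possibly infinite index set $\mathcal{A}$, and checking that monotonicity indeed yields a genuine positive functional whose representing set function is supported on synchronous realizations.
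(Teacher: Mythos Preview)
The paper does not supply its own proof of this corollary: it is stated with attribution to \cite{EllisPiccione2017} and invoked without argument, the point being only that full Independence collapses the set $\Pi$ to a singleton. Your outline is correct and, since you too ultimately defer the key construction to EP, it matches the paper's treatment in substance.

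That said, there is a shorter internal route you do not take. Rather than restarting from Herstein--Milnor, you can simply trace through the proof of Theorem~\ref{thm:basic_rep} and observe where NUI is used: concavity of $\tilde I$ is obtained from the step ``by NUI, $r\alpha q \succsim r\alpha \tilde I(\psi)$,'' and under full Independence this weak preference becomes an indifference, so $\tilde I$ is affine rather than merely concave. The extension $I$ then inherits linearity, and in the separating-hyperplane step the sets $D_1$ and $D_2$ have a common supporting functional for every $\phi$, forcing all the $\pi_\phi$ to coincide on $W$; the resulting $\Pi$ is (or can be taken to be) a singleton. This avoids re-deriving an affine representation from scratch and sidesteps the extension problem you flag as the main obstacle, since the paper's proof already handles it. Your Herstein--Milnor approach is equally valid and arguably more self-contained, but it duplicates work already done in Theorem~\ref{thm:basic_rep}.
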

	
As noted in Section \ref{sec:axioms}, NUI is the key behavioral difference from EP. NUI implies a concern for correlation not captured by the independence axiom. A consequence of the violation of independence is a strict preference for randomization. Formally, there may exist  lotteries so that $p \sim q$ but $\frac12 p +\frac12 q \succ q$. This occurs in much of the ambiguity aversion literature and underlies the logic of the uncertainty aversion axiom. In the CCR model, preference for randomization follows from NUI. Here, the randomization is explicit since the ordering of the horse-race and roulette wheel is reversed.

While CCR provides a simple and interpretable model equivalent to the axioms, the set of priors is not unique, nor is the state space. The state space is canonical, in the sense that any joint distribution over the returns of actions can be expressed by a probability measure on it, but it has a very large number of dimensions. This makes it flexible but potentially hard to apply. The next subsection shows that under non-singularity, a unique state space and set of priors can be identified.

The formal proof of Theorem \ref{thm:basic_rep} can be found in the appendix. The following outlines the main arguments showing sufficiency of the axioms for the representation. NUI implies independence for lotteries over individual actions, which allows identification of a utility index and the marginal probability measure $\mu$ by following \cite{AnscombeAumann1963}.
Each  lottery over profiles is mapped to a real valued act on the product state space $\Omega^\mathcal{A}$.
A utility value is assigned to each by setting the utility equal to that of an action equivalent, with
Weak Monotonicity and Continuity insuring that one exists.
By NUI, this utility function is is homogeneous of degree one and superlinear, but is defined only on a convex subset of acts on $\Omega^\mathcal{A}$.
The key step extends it to all bounded, measurable functions while maintaining the above properties.
Arguments following \cite{GilboaSchmeidler1989} then establish the result.

	\subsection{Identification and Rich Representation}
	
	This section proposes a more tractable special case of CCR with a more parsimonious state space that has a unique set of priors.
	The DM acts as if she undergoes the following procedure. 
	First, she groups together certain actions that she understands as per
	Definition \ref{def:understand} into an
	\emph{understanding class}.
	The classes are revealed from her choices.
	Then, she forms beliefs about the return within each class.
	Finally, she constructs a set of beliefs about the potential relationships across classes.
	When comparing any two profiles, evaluate she each according to the worst of its possible expected utilities from these beliefs.
	The main result of this section shows that this representation exists whenever the DM exhibits the behavior implied by the correlation concern representation as long as there exist sufficiently rich subsets of actions that the DM understands. Moreover, the components of the representation are unique for most utility indexes, unlike the CCR.
	
    The understanding classes are contained in a \emph{correlation cover
	$\mathcal{U}$ for $\succsim$}; formally, $\mathcal{U}$  is a collection of subsets of $\mathcal{A}$ so that $\mathcal{U}$ covers $\mathcal{A}$, each $C \in \mathcal{U}$ is understood by $\succsim$, and no $C\in\mathcal{U}$ contains a distinct $C^{\prime}\in\mathcal{U}$.
	The cover $\mathcal{U}$ is \emph{rich} if each $C \in \mathcal{U}$ is rich.
	A rich correlation cover $\mathcal{U}$ is \emph{coarsest} if any other rich  cover $\mathcal{U}'$ has the property that for any $C' \in \mathcal{U}'$, there exists $C \in \mathcal{U}$ so that $C' \subset C$.
	Examples include $\{\{a\}:a \in \mathcal{A}\}$ and $\{\mathcal{A}\}$; the former is always a correlation cover but never rich, while the latter is the coarsest rich correlation cover provided that $\mathcal{A}$ is understood.
	EP show that there exists a unique coarsest rich correlation cover under non-singularity.
	
	Beliefs are defined on the subjective state space $\Omega^{\mathcal{U}}=\prod_{C\in\mathcal{U}}\Omega$, 
	with the $C$-coordinate denoted
	by $\Omega^{C}$, endowed with the product $\sigma$-algebra 
	$\Sigma^{\mathcal{U}}=\otimes_{C\in\mathcal{U}}\Sigma$.
	All actions in the same class are assigned to	the same copy of $\Omega$, so the possible joint realizations within a class	are identical to the objective ones. 
	Given a state $\vec{\omega}\in\Omega^{\mathcal{U}}$ and a class $C \in \mathcal{U}$, $\omega_{C}$ denotes the component of $\vec{\omega}$ assigned to $C$, and given an event $E \in \Sigma^{\mathcal{U}}$, $E_C$ is the projection of $E$ onto the $C$ component.
	
	\begin{defn}
	\label{def: rich CCR}
		A preference $\succsim$ has a \emph{Rich CCR} $(u,\mu,\mathcal{U},\Pi)$ if
		\vspace{-.5cm}
		
		\begin{itemize}
			\item  \(u:X \rightarrow \mathbb{R}\) is a continuous, strictly increasing
			utility index,
			\item $\mu$ is a  probability measure  on $\Omega$,
			\item $\mathcal{U}$ is the coarsest rich correlation cover for $\succsim$,
			\item and \(\Pi\)  is a
			set of finitely additive probability measures on
			\(
			\left(\Omega^{\mathcal{U}},\Sigma^{\mathcal{U}} \right) \)
			where  the marginal of every $\pi \in \Pi$ on every $C \in \mathcal{U}$ equals $\mu$
		\end{itemize}
		\vspace{-.3cm}
		such that for any $p',q' \in \Delta \mathcal{F}$, $ p' \succsim q' \iff V(p')\geq V(q')$ where
		\[
		V \left( p \right) = \min_{\pi \in \Pi}    \int_{\Omega^\mathcal{U}} \mathbb{E}_{p(\langle a_i \rangle)} \left[ u \left(\sum_{i=1}^n   a_i \left(\omega^{C(a_i)} \right) \right) \right] d\pi  
		\]	for any $C: \mathcal{A} \rightarrow \mathcal{U}$  with $a \in C(a)$ for all $a\in \mathcal{A}$.
	\end{defn}
	
	If one assumes Non-Singularity , then the CCR axioms hold if and only if the DM has a Rich CCR.
	
	\begin{thm}\label{thm:CC_rep}
		Under Non-Singularity, the preference $\succsim$ satisfies Weak Order, Continuity, Weak Monotonicity, Simple Monotonicity, and Negative Uncorrelated Independence if and only if it has a Rich Correlation Concern Representation. Moreover, $\mu$ and $\mathcal{U}$ are unique, $u$ is unique up to a positive affine transformation,  and if $u$ is not a polynomial, then $\Pi$ is unique.	\end{thm}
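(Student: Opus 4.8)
The plan is to derive Theorem~\ref{thm:CC_rep} from Theorem~\ref{thm:basic_rep} together with EP's existence and uniqueness of the coarsest rich correlation cover: the entire content is a collapse of the canonical space $\Omega^{\mathcal{A}}$ onto the smaller $\Omega^{\mathcal{U}}$. The converse (Rich CCR $\Rightarrow$ axioms and Non-Singularity) is immediate: a Rich CCR is a special CCR, obtained by pulling each $\pi$ on $\Omega^{\mathcal{U}}$ back along the diagonal embedding $\iota:\Omega^{\mathcal{U}}\hookrightarrow\Omega^{\mathcal{A}}$ that sends $\omega_C$ to the common value assigned to every $\omega^{a}$ with $a\in C$. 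The pulled-back measures have marginals $\mu$ on each $\sigma(a)$ and induce the same functional $V$, so Theorem~\ref{thm:basic_rep} yields Weak Order, Continuity, Weak Monotonicity, Simple Monotonicity, and NUI; Non-Singularity holds because $\mathcal{U}$ is a rich cover and, by the concentration property below, each $C\in\mathcal{U}$ is understood.

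For the forward direction, Theorem~\ref{thm:basic_rep} supplies a CCR $(u,\mu,\Pi)$ on $\Omega^{\mathcal{A}}$, where I take $\Pi$ closed and convex as in \cite{GilboaSchmeidler1989}, and EP supplies the coarsest rich cover $\mathcal{U}$. The core step is a lemma: \emph{for a rich $C$, $\succsim$ understands $C$ (Definition~\ref{def:understand}) if and only if every $\pi\in\Pi$ assigns probability one to the diagonal} $D_C=\{\vec\omega:\omega^{a}=\omega^{a'}\text{ for all }a,a'\in C\}$. Richness of $C$ makes this clean, since $C$ separates the states of $\Omega$ and hence a plausible realization is $C$-synchronous exactly when all its $C$-coordinates coincide. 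The ``if'' direction is direct, as non-synchronous realizations then carry no $\pi$-mass and so cannot affect $V$. For ``only if,'' I use richness to construct, from any hypothetical off-diagonal mass, a pair $p,q$ that agree on every $C$-synchronous realization yet satisfy $V(p)<V(q)$, contradicting understanding.

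Granting the lemma for each $C\in\mathcal{U}$, every $\pi$ concentrates on $\bigcap_{C}D_C$, so it factors through the quotient map $\Phi:\bigcap_C D_C\to\Omega^{\mathcal{U}}$ identifying the coordinates within each class; set $\Pi'=\{\Phi_*\pi:\pi\in\Pi\}$. Two features make this work. First, if $a\in C\cap C'$ then on $\bigcap_C D_C$ one has $\omega^{C}=\omega^{a}=\omega^{C'}$, so each $\Phi_*\pi$ lives where the coordinates of intersecting classes agree; this is exactly what makes $V$ independent of the selection $C(\cdot)$ in Definition~\ref{def: rich CCR}. Second, the CCR marginal identity $\int u(a(\omega^{a}))\,d\pi=\int u(a(\omega))\,d\mu$ for all $a\in C$, combined with richness of $C$ and strict monotonicity and continuity of $u$, forces the marginal of each $\Phi_*\pi$ on $\Omega^{C}$ to equal $\mu$. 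A change of variables along $\Phi$ then turns the CCR integrand $u(\sum_i a_i(\omega^{a_i}))$ into the Rich CCR integrand $u(\sum_i a_i(\omega^{C(a_i)}))$, and $\min_{\Pi}=\min_{\Pi'}$, so $(u,\mu,\mathcal{U},\Pi')$ is a Rich CCR for $\succsim$.

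It remains to establish uniqueness. Restricting to $\Delta\mathcal{A}$, NUI delivers Independence, so Weak Order, Continuity, and Simple Monotonicity give a subjective-expected-utility representation of single-action lotteries; standard uniqueness (with a rich understood set furnishing all acts on the finite $\Omega$) pins down $\mu$ and determines $u$ up to a positive affine transformation. Uniqueness of $\mathcal{U}$ is EP's. For $\Pi$, I invoke the Gilboa--Schmeidler duality: the closed convex set representing the superlinear functional $V$ is the unique one agreeing with $V$ on the cone $\mathcal{K}$ of realizable utility acts, so $\Pi'$ is unique provided $\mathcal{K}$ separates finitely additive measures on $\Omega^{\mathcal{U}}$. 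By richness, $\mathcal{K}$ contains every $\vec\omega\mapsto u(\sum_{C}g_C(\omega^{C}))$ with each $g_C:\Omega\to X$ arbitrary, and separation amounts to the closed linear span of these functions being all bounded measurable functions on $\Omega^{\mathcal{U}}$. \textbf{This span-density statement, and its reliance on $u$ not being a polynomial, is the main obstacle.} Reducing to the finitely many classes appearing in a given pair of profiles, I would suppose a nonzero signed measure $\nu$ annihilates all such functions and take finite differences in the $g_C$ directions; this reduces the annihilation condition to a classical functional equation in sums of freely varying real arguments, whose continuous solutions are the (exponential) polynomials, with the marginal normalization and strict monotonicity excluding the genuinely exponential cases. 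Hence a non-polynomial $u$ forces $\nu=0$, giving density and the uniqueness of $\Pi$; when $u$ is a polynomial the same computation exhibits distinct sets inducing the same $V$, consistent with the hypothesis.
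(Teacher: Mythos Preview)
Your architecture---collapse the canonical CCR on $\Omega^{\mathcal{A}}$ to $\Omega^{\mathcal{U}}$ via a pushforward---is natural, but the core lemma does not go through as written, and the gap sits exactly where the paper does its real work.

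First, a technical point: the set $D_C=\{\vec\omega:\omega^a=\omega^{a'}\text{ for all }a,a'\in C\}$ is not an event in $\Sigma^{\mathcal{A}}=\otimes_{a}\sigma(a)$, since each factor $\sigma(a)$ is strictly coarser than $2^{\Omega}$ unless $a$ separates all states, and $D_C$ involves infinitely many coordinates. So ``$\pi(D_C)=1$'' is not well defined and your quotient map $\Phi$ has no measurable domain. This can be repaired by restating coherence through finitely many indicator events, which is exactly what the paper does.

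Second, and more seriously, your ``only if'' direction does not establish that \emph{every} $\pi$ in the Gilboa--Schmeidler set is coherent. The $\Pi$ produced in Theorem~\ref{thm:basic_rep} is the \emph{maximal} closed convex set, characterized by $\int f_p\,d\pi\ge V(p)$ for all $p$; an off-diagonal $\pi_0$ in this set may fail to be the minimizer for any $f_p$, and then no pair $p,q$ with $C$-synchronous agreement can yield $V(p)<V(q)$, because $V$ is a minimum over all of $\Pi$ rather than the integral against $\pi_0$. Concretely, with $u$ strictly concave the constraint $\int f_{\langle a,b\rangle}\,d\pi\ge V(\langle a,b\rangle)$ forces only one sign of the cross-difference $u(x{+}y)+u(0)-u(x)-u(y)$ to bind; changing $x,y$ changes the actions $a,b$ and hence the underlying events, so the inequalities do not accumulate on a single joint marginal. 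In short, the maximal $\Pi$ on $\Omega^{\mathcal{A}}$ can contain measures that are not pushforwards from $\Omega^{\mathcal{U}}$, and your sketch gives no mechanism to exclude them.

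The paper proves something weaker but sufficient: for each $p$ there \emph{exists} a minimizer $\pi_0$ satisfying finitely many coherence identities (Lemma~\ref{LemmaSmallBets}), obtained by an induction that mixes $p$ with small bets in the relevant class and exploits action-invariance of $I$ together with the indifferences supplied by understanding. Combined with the bet-replacement Lemma~\ref{lem: betreplace}, this delivers the monotonicity $f_p^{C_1}\ge f_q^{C_2}\Rightarrow p\succsim q$ on $\Omega^{\mathcal{U}}$, after which the Theorem~\ref{thm:basic_rep} machinery is rerun on the smaller space. Your pushforward would succeed if you replaced your lemma by this minimizer-selection statement---but then the substance coincides with the paper's argument, and the one-line ``construct $p,q$ from hypothetical off-diagonal mass'' is not that argument.

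Your uniqueness outline is in the right spirit; the reduction of span-density to a functional equation in freely varying sums, with only polynomial solutions, is the correct target, though ``excluding the exponential cases via marginal normalization'' would need a real argument. That is secondary to the existence gap above.
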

    
    A DM represented by a  rich CCR behaves according to the same axioms as one who has a CCR. Non-singularity plays an analogous role to the usual assumption in decision theory that every act is conceivable and ranked by the DM. This allows construction of a richer representation in which the parameters are unique. As standard, the DM's risk preference is identified from her preference over lotteries. But here her perception of possible correlations is identified uniquely as well, unless her risk attitude is such that she does not care about it. This allows cleaner interpretation of the set of probability measures in the representation. This will be exploited in the subsequent section to analyze some special cases of interest in the applied theory literature.
    
    \section{Applications and extensions}
	\label{sec:discussion}
	This section begins by exploring two polar special cases of the model.
	In the first, an agent considers the true correlation structure but is nonetheless concerned that there may be a different one.
	In the second, the agent treats all misperceived objects as independent, but is concerned that her heuristic may be incorrect.
	Finally, an asset pricing example is presented that relates a number of anomalies to misperception of the relation between the underlying stocks.

	\subsection{Concern about Complexity}
	Consider a DM who prefers to avoid ``complex'' profiles whose evaluation requires her to compute correlations in favor ``simple'' profiles that would yield the same outcome if all correlations are understood. In its simplest manifestation, this DM expresses \[\langle c \rangle \succsim \langle a,b\rangle\] whenever $a(\omega)+b(\omega)=c(\omega)$ for all states $\omega$.
	If the DM understands correlations, then the two alternatives always yield the same outcome. Although a rational DM expresses indifference between the two, the single action profile has no exposure to correlation	whereas the multi-action profile might. 
	
	I capture this in general with the following axiom.
	\begin{ax*}[Complexity Aversion]
		For any $p \in  \Delta\mathcal{A}$ and $q\in\Delta\mathcal{F}$, if \[
		\left(p(\langle a \rangle),\langle a(\omega) \rangle \right)_{p(\langle a \rangle)>0}  \succsim \left( q(\langle a_i \rangle), \sum_i a_i(\omega) \right)_{q(\langle a_i)>0}
		\]
		for all $\omega \in \Omega$, then $p \succsim q$.
	\end{ax*}
	Complexity Aversion requires that the DM opts for a lottery over actions over any lottery over action profiles whenever the lottery over actions yields at least as good an outcome in every state of the world according to the true structure.
	To interpret the axiom, note that the lottery $p$ involves no proper profiles, only individual actions.
	In this sense, $p$ is less complex than $q$. 
	Moreover,  $p$  yields at least as good an outcome as $q$ according to the true model.
	The axiom says that these two advantages are sufficient for the DM to prefer $p$ to $q$.
	Note that Full Monotonicity implies Complexity Aversion which in turn implies Simple Monotonicity.
	
	The following proposition explores the implications of this axiom.
	
	\begin{prop}
		\label{prop: pref_simple}
		Let $\succsim$ have a rich CCR $\left(u,\mu,\mathcal{U},\Pi\right)$ where $u$ is not a polynomial and $\mathcal{U}$ is finite.
		Then, $\succsim$ satisfies Complexity Aversion if and only if $\mu \in \Pi$, i.e. there exists $\pi' \in \Pi$ so that  $\pi'(E)=\mu \left(\bigcap_{C\in \mathcal{U}} E_C \right)$ for every $E \in \Sigma^\mathcal{U}$.
	\end{prop}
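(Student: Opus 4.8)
The plan is to convert Complexity Aversion into a single family of scalar inequalities comparing the candidate diagonal measure with the value function $V$, and then to read off membership in $\Pi$ from the uniqueness of the set of priors. Write $p_\omega=\big(p(\langle a_i\rangle),\langle\sum_i a_i(\omega)\rangle\big)_{p(\langle a_i\rangle)>0}$ for the constant-outcome lottery induced by $p\in\Delta\mathcal F$ at the objective state $\omega$, let $\Phi_q(\vec\omega):=\mathbb{E}_{q(\langle a_i\rangle)}\big[u(\sum_i a_i(\omega^{C(a_i)}))\big]$ so that $V(q)=\min_{\pi\in\Pi}\int_{\Omega^{\mathcal U}}\Phi_q\,d\pi$, and let $\pi'$ be the push-forward of $\mu$ under the diagonal map $\omega\mapsto(\omega,\dots,\omega)\in\Omega^{\mathcal U}$, which is exactly the measure in the statement and plainly has marginal $\mu$ on each $C\in\mathcal U$. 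Two identities, both immediate from the marginal agreement in the CCR and from $V(\langle x\rangle)=u(x)$ for constant actions, do the work: for $p\in\Delta\mathcal A$ the set $\Pi$ drops out and $V(p)=\int_\Omega V(p_\omega)\,d\mu$; and evaluating $\Phi_q$ at $\pi'$ collapses every coordinate to a common $\omega$, so $\int_{\Omega^{\mathcal U}}\Phi_q\,d\pi'=\int_\Omega V(q_\omega)\,d\mu$.

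For the direction $\mu\in\Pi\Rightarrow$ Complexity Aversion, suppose $p\in\Delta\mathcal A$, $q\in\Delta\mathcal F$, and $p_\omega\succsim q_\omega$, i.e. $V(p_\omega)\ge V(q_\omega)$, for every $\omega$. Chaining the identities and using $\pi'\in\Pi$ at the last step,
\[
V(p)=\int_\Omega V(p_\omega)\,d\mu\ \ge\ \int_\Omega V(q_\omega)\,d\mu\ =\ \int_{\Omega^{\mathcal U}}\Phi_q\,d\pi'\ \ge\ \min_{\pi\in\Pi}\int_{\Omega^{\mathcal U}}\Phi_q\,d\pi\ =\ V(q),
\]
so $p\succsim q$, as required.

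For the converse, fix any $q\in\Delta\mathcal F$. Since $\mathcal A$ contains every Savage act, replace each profile $\langle a_i\rangle$ in the support of $q$ by the single action $a^\ast$ with $a^\ast(\omega)=\sum_i a_i(\omega)$, keeping the original probabilities; this defines $p^\ast\in\Delta\mathcal A$ with $p^\ast_\omega=q_\omega$ for all $\omega$. The hypothesis of Complexity Aversion then holds with equality, so $V(p^\ast)\ge V(q)$, which by the identities reads $\int_{\Omega^{\mathcal U}}\Phi_q\,d\pi'=\int_\Omega V(q_\omega)\,d\mu=V(p^\ast)\ge V(q)$. Thus Complexity Aversion is equivalent to $\int_{\Omega^{\mathcal U}}\Phi_q\,d\pi'\ge V(q)$ for every $q$. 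Now set $\Pi^\ast=\overline{\mathrm{co}}\,(\Pi\cup\{\pi'\})$: it is a closed convex set of probability measures on $(\Omega^{\mathcal U},\Sigma^{\mathcal U})$ with marginal $\mu$ on each class, and since the minimum of a linear functional over a convex hull is attained on the generators, $\min_{\pi\in\Pi^\ast}\int_{\Omega^{\mathcal U}}\Phi_q\,d\pi=\min\{V(q),\int_{\Omega^{\mathcal U}}\Phi_q\,d\pi'\}=V(q)$ for every $q$. Hence $(u,\mu,\mathcal U,\Pi^\ast)$ is again a Rich CCR of $\succsim$; because $u$ is not a polynomial, Theorem \ref{thm:CC_rep} gives uniqueness of the set of priors, forcing $\Pi^\ast=\Pi$ and therefore $\pi'\in\Pi$, i.e. $\mu\in\Pi$.

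I expect the converse to be the only real obstacle, and within it the delicate point is the final move from ``$\pi'$ never dips below $V$'' to genuine membership $\pi'\in\Pi$. Two ingredients make it legitimate: richness of $\mathcal A$ (all Savage acts), which produces the matching $p^\ast$ for every $q$ and hence yields the inequality for the full family $\{\Phi_q\}$ rather than a sparse subset; and the uniqueness of $\Pi$ from Theorem \ref{thm:CC_rep}, which is precisely where the non-polynomial hypothesis on $u$ enters, with finiteness of $\mathcal U$ ensuring $\Omega^{\mathcal U}$ is finite so that $\Pi^\ast$ is compact and all minima are attained. The remaining steps --- the two displayed identities and the marginal computation for $\pi'$ --- are routine consequences of the marginal agreement built into the representation, and I would verify them first.
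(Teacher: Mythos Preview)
Your proof is correct, and both directions are cleanly argued. However, your route for the converse is genuinely different from the paper's. The paper works by hand: it picks $x,y,z$ with $u(x+z)+u(y+z)-u(x+y+z)-u(z)\neq 0$ (this is where ``$u$ not a polynomial'' enters for the author), builds explicit two-action profiles $\langle a_E^i,b_E^j\rangle$ and their single-action sums, applies Complexity Aversion to each comparison, and reads off that a minimizing $\pi\in\Pi$ must satisfy $\pi\big((E)_{C_j}\cap(E^c)_{C_k}\big)=0$; iterating over enough events and class pairs forces that minimizer to be the diagonal measure. Your argument is more abstract: you show that Complexity Aversion is equivalent to $\int\Phi_q\,d\pi'\ge V(q)$ for every $q$, enlarge $\Pi$ to $\Pi^\ast=\overline{\mathrm{co}}(\Pi\cup\{\pi'\})$ without changing $V$, and then invoke the uniqueness clause of Theorem~\ref{thm:CC_rep} to conclude $\pi'\in\Pi$. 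The trade-off is that the paper's proof is self-contained and makes transparent exactly how the non-polynomial hypothesis is used (to get $K\neq 0$), whereas your proof is shorter and conceptually cleaner but outsources the non-polynomial assumption entirely to the uniqueness theorem, and uses finiteness of $\mathcal{U}$ to guarantee compactness of $\Pi^\ast$ and attainment of the minimum. Your construction of $p^\ast$ from $q$ is the key step that the paper does not take, and it is exactly what makes the uniqueness shortcut available.
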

	
	In words, a DM with a rich CCR is complexity Complexity Averse  if and only if the true correlation structure belongs to her set of possible priors. 
	 Complexity Aversion  cannot be exhibited by a DM who takes a probabilistic approach to correlation unless she correctly perceives every lottery. Such a DM necessarily overvalues some misunderstood profiles while overvaluing others. For instance, consider a risk averse DM who misunderstands the connection between two stocks $a$ and $b$. She overvalues the profile $\langle a,b \rangle$ whenever she underestimates their correlation, and undervalues it when she over estimates the correlation. But if she overestimates the correlation between $a$ and a long position on stock $b$, she underestimates the correlation between $a$ and a short position on $b$.
	Consequently, she overvalues one if and only if she undervalues the other.

	\subsection{Concern about  correlation neglect}
	A common way of modeling correlation neglect is for agents to treat any two actions or signals with unknown correlation as independent.
	This section considers a DM who applies this heuristic, yet realizes that it may lead to incorrect inference.
	This has been applied, for instance, by \cite{LevyRazin2015amb}.
	For this subsection only, I restrict attention to only two understanding classes to simplify exposition.
	
	Stating the axiom and result requires a few preliminary definitions.
	A bet is an action $b$ with exactly two possible consequences. 
	A pair of bets $(b_1,b_2)$ is potentially misperceived if there exist rich, understood sets $C_1,C_2$ with $b_1 \in C_1$ and $b_2 \in C_2$ where $C_1 \bigcup C_2$ is not understood.
	A preference relation $\succsim$ is risk averse if whenever $p \in \Delta X$  Second-Order Stochastically Dominates $q \in \Delta X$, $p \succsim q$, strictly whenever the domination is strict.
	
	I can now state the key axiom of this subsection.
	\begin{ax*}[Default to Independence]
		For any collection of pairs of potentially misperceived bets $(b_1^i,b_2^i)_{i=1}^m$,
		if $p_j^i = \left( p_j^i,x_{ij};(1-p_j^i),y_{ij} \right) \sim b_j^i$ and  $range(b_j^i)=\{x_{ij},y_{ij}\}$ for every $j=1,2$ and $i=1,\dots,m$,
		then the lottery
		$$\sum_{i=1}^m\alpha_i
		\left( p^i_1p^i_2, x_{i1}+x_{i2};p^i_1(1-p^i_2), x_{i1}+y_{i2};(1-p^i_1)p^i_2, y_{i1}+x_{i2};(1-p^i_1)(1-p^i_2), y_{i1}+y_{i2} \right)$$ 
		is weakly preferred to $ (\alpha_i,\langle b_1^i,b_2^i \rangle)_{i=1}^m$
		for any $\alpha_1,\dots,\alpha_m\geq 0$ with $\sum_{i=1}^m \alpha_i=1$.
	\end{ax*}
	To interpret the axiom, consider $m=1$ and pick a pair of misperceived bets $(b_1,b_2)$ so that $b_1$ returns either $x$ or $0$ and $b_2$ pays either $y$ or $0$.
	If the lotteries $p=(p,x;(1-p),0)$ and $q=(q,y;(1-q) 0)$ are indifferent to $b_1$ and $b_2$ respectively, then the DM acts as if $b_1$ pays off with probability $p$ and $b_2$ with probability $q$.
	If $p$ and $q$ are independent, then holding both gives the lottery $l^I=(pq,x+y;p(1-q),x;(1-p)q,y;(1-p)(1-q),0)$.
	If the DM thinks $b_1$ and $b_2$ are independent, then she should be indifferent between $l^I$ and $\langle b_1,b_2 \rangle$.
	Instead, the axiom requires that $l^I$ is weakly preferred to $\langle b_1,b_2 \rangle$.
	That is, she values the profile less than she would have if she were certain that $b_1$ and $b_2$ are independent. 
	The axiom extends this case to many pairs of bets.
	
	The axiom captures the behavior in question.
	
	\begin{prop}
	\label{prop: independent benchmark}
		Let $\succsim$ be risk-averse and have a rich CCR $\left(u,\mu,\mathcal{U},\Pi\right)$ where  $\#\mathcal{U}=2$.
		Then, $\succsim$ satisfies Default to Pairwise Independence if and only if  then there is a measure $\mu^2 \in \Pi$ so that $\mu^2(E_{C} \bigcap F_{C'})= \mu(E) \mu(F)$ for every $E ,F\in \Sigma$ and $C,C'\in \mathcal{U}$.
	\end{prop}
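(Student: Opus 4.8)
The plan is to recast the axiom as a separation statement about the product measure and the set $\Pi$. Write $\mu\times\mu$ for the product measure on $\Omega^{\mathcal U}=\Omega^{C_1}\times\Omega^{C_2}$ (with $\mathcal U=\{C_1,C_2\}$); the condition ``there is $\mu^2\in\Pi$ with $\mu^2(E_C\cap F_{C'})=\mu(E)\mu(F)$'' says exactly that $\mu\times\mu\in\Pi$, so this is the target. Since $\#\mathcal U=2$ the cover satisfies $C_1\cup C_2=\mathcal A$, and $\mathcal A$ cannot be understood (else $\{\mathcal A\}$, being rich because $\mathcal A$ contains every Savage act, would be a strictly coarser rich cover); hence \emph{every} pair of bets $(b_1,b_2)$ with $b_1\in C_1$, $b_2\in C_2$ is potentially misperceived, and by richness of $C_1,C_2$ these bets range over all two-valued functions on $\Omega$. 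First I would evaluate both sides of the axiom. Writing $(b_1\oplus b_2)(\omega_1,\omega_2)=b_1(\omega_1)+b_2(\omega_2)$, the indifferences $p^i_j\sim b^i_j$ force $p^i_j=\mu(\{b^i_j=x_{ij}\})$, because single actions are ranked by expected utility under $\mu$ and $u$ is strictly increasing; hence each $l^I_i$ is a lottery over outcomes with $V(l^I_i)=\int u(b^i_1\oplus b^i_2)\,d(\mu\times\mu)$, while $V\big((\alpha_i,\langle b^i_1,b^i_2\rangle)_i\big)=\min_{\pi\in\Pi}\sum_i\alpha_i\int u(b^i_1\oplus b^i_2)\,d\pi$. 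Thus the axiom is equivalent to
\[\int\psi\,d(\mu\times\mu)\ \ge\ \min_{\pi\in\Pi}\int\psi\,d\pi\qquad(\star)\]
for every $\psi$ in the convex hull $\mathcal C$ of $\{u(b_1\oplus b_2):b_1\in C_1,\ b_2\in C_2\text{ bets}\}$.

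The direction in which $\mu\times\mu\in\Pi$ implies the axiom is immediate: then $\min_{\pi\in\Pi}\int\psi\,d\pi\le\int\psi\,d(\mu\times\mu)$ for every $\psi$, which is exactly $(\star)$.

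For the converse I would argue by separation. Every $\pi\in\Pi$ and $\mu\times\mu$ share the marginal $\mu$ on each coordinate, so additively separable functions $g(\omega_1)+h(\omega_2)$ integrate to the same constant against all of them; I may therefore replace any function by its \emph{interaction part}. Modulo separable functions, $u(b_1\oplus b_2)$ equals $\Delta\,\mathbf 1_{E\times F}$, where $E=\{b_1\text{ high}\}$, $F=\{b_2\text{ high}\}$, and $\Delta$ is the mixed second difference of $u$ at the four outcome sums. Risk aversion makes $u$ strictly concave, so $\Delta\neq0$ and its sign is determined by whether the two bets are oriented the same way; reorienting one bet flips the sign, and shrinking the spreads sends $|\Delta|$ continuously to $0$. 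Because $\Omega$ is finite, the interaction functions are spanned by the rectangle indicators $\mathbf 1_{\{i\}\times\{j\}}$, each realized up to a separable remainder by a pair of bets on the singletons $\{i\}$ and $\{j\}$. Matching signs by orientation, matching magnitudes by the intermediate value theorem (after a harmless rescaling of the target), and spreading the weight $\sum_i\alpha_i=1$ over the finitely many rectangles where the target is nonzero, I can realize the interaction part of \emph{any} function as $\psi^{\mathrm{int}}$ for some $\psi\in\mathcal C$. Now suppose $\mu\times\mu\notin\Pi$. Since $\Pi$ is a compact convex subset of the finite-dimensional simplex, a separating hyperplane yields $\phi$ with $\int\phi\,d(\mu\times\mu)<\min_{\pi\in\Pi}\int\phi\,d\pi$; passing to $\phi^{\mathrm{int}}$ preserves this strict inequality (and forces $\phi^{\mathrm{int}}\neq0$), and realizing it as $\psi^{\mathrm{int}}$ for some $\psi\in\mathcal C$ gives, because integrals against $\Pi\cup\{\mu\times\mu\}$ depend only on interaction parts, $\int\psi\,d(\mu\times\mu)<\min_{\pi\in\Pi}\int\psi\,d\pi$, contradicting $(\star)$. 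Hence $\mu\times\mu\in\Pi$.

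The main obstacle is the realization step: showing that convex combinations of the very special functions $u(b_1\oplus b_2)$ recover, modulo separable functions, every interaction function on $\Omega\times\Omega$. This is where risk aversion earns its keep, by reducing the sign of each mixed second difference to the orientation of a single bet, and where the bookkeeping of the weights $\alpha_i$ under the constraint $\sum_i\alpha_i=1$ must be combined with the continuity of $\Delta$ in the outcome spreads to hit a prescribed interaction target exactly; once this spanning is established, both the separation argument and the easy converse are routine.
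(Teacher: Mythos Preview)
Your approach is correct and follows the same core strategy as the paper: translate the axiom into an inequality of the form $\int\psi\,d(\mu\times\mu)\ge\min_{\pi\in\Pi}\int\psi\,d\pi$, then use a separating-hyperplane argument together with the fact that (scaled) rectangle indicators span enough of the function space to produce a violating $\psi$ whenever $\mu\times\mu\notin\Pi$.

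The main packaging difference is worth noting. The paper does not quotient by separable functions; instead it builds auxiliary lotteries $q_i=\tfrac13\langle a_{E_i}^{C_1},c_{F_i}^{C_2}\rangle+\tfrac23\big(\tfrac12\langle a_{E_i^c}^{C_1}\rangle+\tfrac12\langle c_{F_i^c}^{C_2}\rangle\big)$ whose induced acts $f_{q_i}$ take the clean form $\text{const}+\tfrac{K}{3}\chi_{(E_i)_{C_1}\cap(F_i)_{C_2}}$ with a single fixed $K=u(x)+u(y)-u(x+y)-u(0)>0$. After separating, the paper rescales the witness $g$ by a positive affine map into $\mathrm{co}\{f_{q_i}\}$ and then uses Lemma~\ref{lem:action_IA} to strip the single-action mixture back out, recovering a violation of the axiom for the pure profile lottery $\sum\alpha_i\langle a_{E_i}^{C_1},c_{F_i}^{C_2}\rangle$. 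Your interaction-part device accomplishes the same thing more directly: because every $\pi\in\Pi$ and $\mu\times\mu$ share marginals, separable functions integrate identically, so you may work in the quotient and avoid both the auxiliary single-bet components and the final unwinding step. The trade-off is that your realization argument must handle variable second differences $\Delta$ rather than a single fixed $K$; as you note, strict concavity pins down the sign by orientation and continuity in the spreads gives the magnitudes, and after adding a constant to the target representative (which is free modulo separable functions) and rescaling, equal weights $\alpha_{ij}=1/K^2$ with a common $\Delta$ suffice---which is exactly the paper's construction in disguise.

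One small point to tighten: you invoke compactness and convexity of $\Pi$ for the separation. The Rich CCR definition does not assert this, but since $\Omega^{\mathcal U}$ is finite the canonical (weak$^*$-closed convex) set representing the same preference does the job; the paper's proof makes the same implicit move.
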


    In the representation, the DM considers the possibility that all actions in different understanding classes are independent. Although she may consider other correlation structures, the possibility that the misperceived actions are independent leads her to prefer the sum of two equivalent lotteries that she is certain are independent.
    
    The heuristic of treating objects with unknown correlation as independent has some appeal as a benchmark, especially statistically. The resulting probability measure is the one that maximizes entropy, subject to getting the marginal distribution of each object correct. Using entropy to measure informativeness, this measure makes the ``least'' use of information about which the DM is uncertain.
    
	\subsection{Asset pricing example}
	I turn to some simple implications for asset pricing.
	This subsection focuses on an exchange economy with a representative agent whose preference over asset portfolios has a rich CCR.
	The agent groups the assets she is endowed with into one understanding class. In equilibrium, she does not purchase any assets in a different understanding class for a  range of prices. This non-participation can lead to neglected arbitrage opportunities and a price premium for assets she understands.
	
	The equilibrium parallels many aspects of that in seminal works by \cite{DowWerlang1992} and \cite{EpsteinWang1994}. The key novelty is the relationship between non-participation and indeterminacy with the trader's understanding of the assets.
	In contrast to those papers, indeterminacy and non-participation occur for a generic endowment, provided that the endowed assets belongs to one understanding class.
	Thus it suggests correlation concern as a possible mechanism that explains a number of anomalies for the standard rational asset pricing model, including home bias \citep{FrenchPoterba1991}  and failures of no-arbitrage conditions \citep{Fleckensteinetal2014}.

	Consider an exchange economy with a representative trader, two states $\Omega=\{A,B\}$, and a single consumption good. The trader has available three risky assets, $\alpha$, $\beta$, and $\beta'$. Each is an Arrow security with positive return in states $A$, $B$, and $B$, respectively.
	    She also trades a safe asset, which serves as numeraire, that returns a unit of the consumption good regardless of the state.
   The trader has a strictly concave, strictly increasing utility index  who is endowed with $a $ units of $\alpha$, $b$ units of $\beta$, $0$ of $\beta'$ and $c^*$ units of the safe asset. Assets are traded in a competitive market with prices for securities $\alpha,\beta,\beta'$ given by $p_\alpha, p_\beta,p_{\beta'}$ respectively.
Her utility index $u$ is twice-differentiable with $u'(x)>0>u''(x)$ for every $x>0$.
	
	The action of holding  $x$ units of asset $\gamma$ is denoted $x \gamma$, and the action holding $c$ units of the safe asset is denoted $c$. The trader chooses an asset portfolio in her budget set at the market prices. Her preference over asset portfolios has a Correlation Concern Representation $U(\cdot)$ with two distinct understanding classes, $\mathcal{U}_1$ and $\mathcal{U}_2$, and set of probability measure $\Pi$.  She understands the connection between assets $\alpha$ and $\beta$:  $x \alpha , y \beta \in \mathcal{U}_1 $ for any $x,y$. She does not necessarily understand the connection between either $\alpha $ or $\beta $ and  $\beta'$: $z\beta'\in \mathcal{U}_2$ for any $z$.
	
	With slight abuse of notation, her marginal probability measure $\mu(\cdot)$ assigns $\mu \in (0,1)$ probability to state $A$, and $\pi \in \Pi$ if and only if it has marginals that agree with $\mu(\cdot)$ and $\pi(\omega_1=A,\omega_2=B)\in [r_*, r^*] \subset [0,\min\{ \mu,1-\mu\}]$. 
	The parameters determine how the trader reacts to potentially unknown correlation. If $r_*=r^*=0$, then the agent perceives correlation correctly. Otherwise, she misperceives it. 
	If $r_* <r^*$, then she is concerned about her potential misperception.
	In the language of this section, she satisfies Complexity Aversion if and only if $r_*=0<r^*$, and satisfies Default to Independence if and only if $r_*\leq \mu(1-\mu(B))\leq r^*$.

	First, consider the trader's demand function. Let $V(x,y,z,c)$ be the trader's utility from the asset profile $\langle x \alpha, y \beta, z \beta', c \rangle $. Then, $V(x,y,z,c)$ equals
	\begin{equation}
	    \min_{r\in [r_*,r^*]}  \mu u(x+c)+(1-\mu) u(y+z+c)+rK(x,y,z,c)
	    \label{eq: asset utility}
	\end{equation}
	for $K(x,y,z,c)$ equal to 
	\[
	u(x+z+c)+u(y+c)-u(x+c)-u(y+z+c).
	\]
	As a minimum of strictly concave functions $V$ is strictly concave, so $(x^*,y^*,z^*,c^*)$ is an optimum if and only if $\lambda (p_\alpha,p_\beta,p_{\beta'},1) $ belongs to the subdifferential of $V$, $\partial V(x,y,z,c)$, evaluated at $(x^*,y^*,z^*,c^*)$.
	We are interested in determining  when the trader demands $0$ of asset $\beta'$. Simple calculations show that $v \in \partial V(x,y,0,c)$ if and only if 
	\[
	v=\left(\begin{array}{c}
	 \mu  u'( x +c)\\
	(1-\mu) u'(y+c)\\
	(1-\mu) u'(y+c)+r\left[u'( x +c)-u'( y +c) \right]\\
	\mu  u'( x +c)+(1-\mu) u'(y+c)
	\end{array}
	\right)^T.
	\]
	for some $r \in [r_*, r^*]$ that minimizes (\ref{eq: asset utility}).
	That is, $v$ is a vector whose first, second, and last components equal the derivatives of those variables. The derivative of $V$ in the third component is undefined, leading to the indeterminacy in that component.
	For a portfolio $(x,y,0,c)$ with  $x \neq y$, any $r \in [r_*,r^*]$ minimizes the utility function. Hence, the bundle is demanded at price vector $p^*$ if and only if $$\lambda^{-1} p^*=  \left( \mu u'(x+c),(1-\mu)u'(y+c),(1-\mu) u'(y+c)+r\left[u'( x +c)-u'( y +c) \right]\right)$$
	for some $r \in [r_*,r^*]$ and where $\lambda=\left[\mu  u'( x +c)+(1-\mu) u'(y+c) \right]^{-1}$. Hence, there is typically an interval of prices of $\beta'$ for which the trader demands no units of it.

	Now, turn to the general competitive equilibrium of the single trader exchange economy. In any equilibrium, she consumes her endowment, $(a,b,0,c^*)$. Prices adjust so that she is willing to do so. 
	I calculate these prices. 
	Since the safe asset is numeraire,  
	\begin{equation*}
	   (p_\alpha,p_\beta)= \left(
	 \frac{\mu  u'( a +c^*)}{\mu  u'( a +c^*)+(1-\mu) u'(b+c^*)},
	 \frac{(1-\mu) u'(a+c^*)}{\mu  u'( a +c^*)+(1-\mu) u'(b+c^*)}
	\right)
	\label{eq: price understood assets}
	\end{equation*}
	as with subjective expected utility.
	However $p_{\beta'}$ can be any member of the set 
\[
\left\{p_\beta+r\frac{u'(a+c^*)-u'(b+c^*)}{\mu  u'( a +c^*)+(1-\mu) u'(b+c^*)}: r \in [r_*,r^*] \right\}.
\]
	When $a \neq b$, $p_{\beta} $ need not equal $p_{\beta'}$ and there can be a non-degenerate interval of equilibrium prices for asset $\beta'$ at which the trader neither goes long ($z>0$) nor short ($z<0$) on it. Moreover, this is the case for a large set of endowments of the other two assets. The only other requirement is that  $a+c,b+c >0$, so that the derivative of $u$ is well-defined.

	\newpage
	\appendix
	
	\section{Proofs}
	
	Throughout, I sometimes write $p \alpha q$ instead of $\alpha p+(1-\alpha)q$ for two lotteries $p,q$.

	\begin{lem}
		\label{lem:continuity} Under Axioms \ref{ax: first}-\ref{ax: last}, the set $\{\alpha\in\lbrack0,1]:\alpha	p+(1-\alpha)q\succsim  \alpha p'+(1-\alpha)q'\}$ is closed for all $p,q,p',q'\in\Delta\mathcal{F}$.
	\end{lem}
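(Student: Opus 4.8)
The plan is to show that the set

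$$A = \{\alpha \in [0,1] : \alpha p + (1-\alpha)q \succsim \alpha p' + (1-\alpha)q'\}$$

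is closed by exploiting the fact that mixture is jointly continuous in the mixing weight *and* that Continuity (Axiom~\ref{ax: Cont}) is stated in the sequential form. So I would take an arbitrary sequence $\alpha_n \in A$ with $\alpha_n \to \alpha$, set $r_n = \alpha_n p + (1-\alpha_n)q$ and $s_n = \alpha_n p' + (1-\alpha_n)q'$, and aim to conclude $\alpha \in A$, i.e.\ $\alpha p + (1-\alpha)q \succsim \alpha p' + (1-\alpha)q'$. Since $\alpha_n \in A$ means exactly $r_n \succsim s_n$ for every $n$, the whole argument reduces to verifying the two convergences $r_n \to r := \alpha p + (1-\alpha)q$ and $s_n \to s := \alpha p' + (1-\alpha)q'$ in the weak* topology on $\Delta\mathcal{F}$, after which Axiom~\ref{ax: Cont} delivers $r \succsim s$ immediately.

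\emph{First} I would establish the convergence $r_n \to r$. The subtlety is that the topology on $\mathcal{F}$ is the unusual discrete-off-$X^*$ metric $d$ from Section~\ref{sec:framework}, so I cannot invoke generic continuity of mixing; I must check weak* convergence directly. For lotteries with a common finite support structure this is concrete: both $r_n$ and $r$ are supported on the (finite) union of the supports of $p$ and $q$, and the probability weight that $r_n$ assigns to each profile in that support is an affine (hence continuous) function of $\alpha_n$ — namely $\alpha_n p(\cdot) + (1-\alpha_n)q(\cdot)$ — which converges to the corresponding weight of $r$ as $\alpha_n \to \alpha$. \emph{Then} I would confirm this pointwise convergence of weights on a fixed finite support implies weak* convergence: for any bounded continuous $g: \mathcal{F} \to \mathbb{R}$, the integral $\int g \, dr_n$ is a finite sum $\sum_{\langle a_i\rangle} (\alpha_n p + (1-\alpha_n)q)(\langle a_i\rangle)\, g(\langle a_i\rangle)$ over the common support, and this converges to $\int g \, dr$ termwise. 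The identical argument gives $s_n \to s$.

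\emph{The main obstacle} I anticipate is being careful about the metric $d$ when profiles in the support are constant outcomes $\langle x\rangle \in X^*$. Because $d$ makes convergence of non-constant profiles discrete but genuinely continuous along sequences of constant outcomes, one must make sure that holding the supporting profiles \emph{fixed} (they do not move as $n$ varies, only their probabilities do) keeps us safely within the weak* framework. Here the fixity of the support is precisely what saves the argument: the profiles appearing in $r_n$ and $s_n$ do not depend on $n$, so no genuine $d$-limit of profiles is ever invoked — only the mixing \emph{weights} vary, and those live in $[0,1]$ with its standard topology. I would make this explicit to rule out any pathology coming from the discrete component of $d$. With weak* convergence of both sequences secured, closedness follows directly from Axiom~\ref{ax: Cont}, completing the proof.
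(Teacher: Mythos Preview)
Your proposal is correct and follows essentially the same approach as the paper: the paper's proof simply observes that $\alpha_n p + (1-\alpha_n)q \to \alpha p + (1-\alpha)q$ in the weak* topology whenever $\alpha_n \to \alpha$ (citing Theorem~15.3 of Aliprantis--Border) and then applies Axiom~\ref{ax: Cont}. Your write-up is more explicit about why the weak* convergence holds---via pointwise convergence of weights on a fixed finite support---and is careful about the metric $d$, but the underlying argument is the same.
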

	\begin{proof}
		Follows from observing that $\alpha_n	p+(1-\alpha_n)q \rightarrow \alpha p +(1-\alpha)q$ according to the weak* topology whenever $\alpha_n \rightarrow \alpha$. See e.g. Theorem 15.3 of \cite{aliprantisborder06}.
	\end{proof}
	\begin{lem}
		\label{lem:action_IA}
		 Under Axioms \ref{ax: first}-\ref{ax: last}, for any $p,q,r\in\Delta\mathcal{F}$ and $\alpha \in (0,1]$, 
		if $r \in \Delta \mathcal{A}$, then $p \succsim q \iff \alpha p +(1-\alpha)r \succsim \alpha q +(1-\alpha)r$.
	\end{lem}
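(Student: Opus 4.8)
The plan is to collapse the biconditional to a single \emph{neutrality} statement about mixing with $\Delta\mathcal{A}$, and then to attack that statement with the two substantive axioms, NUI and Weak Monotonicity. First I would set up the calibration on $\Delta\mathcal{A}$. The set $\Delta\mathcal{A}$ is convex, since a mixture of two lotteries supported on single-action profiles is again supported on single-action profiles; and whenever both elements of a comparison lie in $\Delta\mathcal{A}$ the worse one is automatically in $\Delta\mathcal{A}$, so NUI yields full independence on $\Delta\mathcal{A}$. Together with Weak Order and the mixture-continuity of Lemma \ref{lem:continuity}, the mixture-space theorem then furnishes an expected-utility functional $\ell$ representing $\succsim$ on $\Delta\mathcal{A}$. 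Next, using Weak Monotonicity to bracket any $p\in\Delta\mathcal{F}$ between the constant given by its smallest attainable total outcome and the constant given by its largest, Simple Monotonicity to order constants, and Continuity together with connectedness of an interval, every $p$ admits a constant equivalent $\langle t_p\rangle\sim p$; write $c_p=\langle t_p\rangle\in\Delta\mathcal{A}$.

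Granting the \emph{neutrality} claim that, for $q\in\Delta\mathcal{F}$ with $c_q\sim q$, any $r\in\Delta\mathcal{A}$ and $\alpha\in(0,1]$,
\[
\alpha c_q+(1-\alpha)r\ \sim\ \alpha q+(1-\alpha)r,
\]
the lemma follows from the chain $p\succsim q\iff c_p\succsim c_q\iff \alpha c_p+(1-\alpha)r\succsim\alpha c_q+(1-\alpha)r\iff \alpha p+(1-\alpha)r\succsim\alpha q+(1-\alpha)r$, where the first equivalence is transitivity, the middle one is independence on $\Delta\mathcal{A}$ (all three terms lie in $\Delta\mathcal{A}$), and the last is neutrality applied to $p$ and to $q$. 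So it suffices to prove neutrality.

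One half is immediate from NUI: since $q\sim c_q$ gives $q\succsim c_q$ with $c_q\in\Delta\mathcal{A}$, NUI delivers $\alpha q+(1-\alpha)r\succsim\alpha c_q+(1-\alpha)r$. In words, mixing a profile-lottery with an element of $\Delta\mathcal{A}$ can only weakly help it relative to its constant equivalent.

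The reverse inequality $\alpha c_q+(1-\alpha)r\succsim\alpha q+(1-\alpha)r$ — that such mixing does not \emph{strictly} help — is the main obstacle. It cannot be obtained from NUI, which is one-sided (it is the analogue of the Negative Certainty Independence axiom of \cite{Dillenberger2010} and \cite{Cerreia-Vioglioetal2015}, and with Weak Order and Continuity alone is consistent with the strict inequality running the other way); this is exactly where Weak Monotonicity must enter. The plan is to use Weak Monotonicity, together with the calibration $\ell$, to pin the value of a profile-lottery to the expected utilities $\ell(p_{\vec x})$ of its plausible realizations, forcing the evaluation to behave as a \emph{minimum} of expectations over joint distributions whose single-action marginals are fixed by $\ell$; an element of $\Delta\mathcal{A}$ then contributes to every such expectation a quantity determined by those marginals, hence constant across the distributions being minimized over, so mixing it in shifts the value affinely and gives the reverse inequality, whence equality. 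One natural way to try to realize this is to approximate $c_q$ from above by constants $\langle t_q+\varepsilon\rangle\succ q$, establish $\alpha\langle t_q+\varepsilon\rangle+(1-\alpha)r\succsim\alpha q+(1-\alpha)r$, and let $\varepsilon\downarrow0$ by Continuity. I expect the hard point to be precisely that the crude Weak Monotonicity bound (the value lies below the maximum of $\ell(p_{\vec x})$ over plausible realizations) is too weak to yield this affine upper bound; obtaining the sharp, minimum-of-expectations consequence of Weak Monotonicity is the crux and the bulk of the work, and it parallels the functional-extension step of the \cite{GilboaSchmeidler1989} argument invoked in the main proof.
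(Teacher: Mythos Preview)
Your reduction to the neutrality claim is sound, and you correctly isolate the reverse inequality $\alpha c_q+(1-\alpha)r\succsim\alpha q+(1-\alpha)r$ as the only nontrivial step. But the resolution you propose has a genuine gap. You say that obtaining this inequality requires the ``sharp, minimum-of-expectations consequence of Weak Monotonicity,'' which ``parallels the functional-extension step of the \cite{GilboaSchmeidler1989} argument invoked in the main proof.'' In this paper, however, that functional step (the construction of $\tilde I$ and, in particular, its action-invariance) is \emph{derived from} the present lemma via Lemma~\ref{lem:converse_action_IA}; invoking it here would be circular. And as you yourself note, the crude WM bound---sandwiching $q$ between its best and worst plausible-realization outcomes---is too coarse: a constant slightly above the certainty equivalent of $q$ need not dominate $q_{\vec x}$ for every plausible realization, so neither WM nor NUI applies directly to the approximants $\langle t_q+\varepsilon\rangle\succ q$.

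The paper avoids this entirely with an elementary bootstrapping trick (attributed to \cite{shapleybaucells98}) that you have missed. The key observation is that NUI can be \emph{iterated}: since both $r$ and the action-equivalent $p'\sim p$ lie in $\Delta\mathcal{A}$, any mixture $p'\tau r$ is again in $\Delta\mathcal{A}$, so from $q\tau r\succsim p'\tau r$ NUI permits further mixing of both sides with either $p'$ or $q$. Combined with the identity $(p'\tau r)\tfrac{1}{1+\tau}q=(q\tau r)\tfrac{1}{1+\tau}p'$, this yields $q\tfrac{2\tau}{1+\tau}r\succsim p'\tfrac{2\tau}{1+\tau}r$. Taking $\tau=\sup\{\beta:q\beta r\succsim p'\beta r\}$ forces $\tau\geq 2\tau/(1+\tau)$, hence $\tau\in\{0,1\}$; since $\tau\geq\alpha>0$ one gets $\tau=1$ and thus $q\succsim p'\sim p$. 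This contrapositive gives $p\succ q\Rightarrow p\alpha r\succ q\alpha r$ directly, and the indifference case follows by a WM-based perturbation and Continuity. Weak Monotonicity is used only to produce certainty equivalents and to perturb strictly, not for any representation-level structure.
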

	\begin{proof}
		Fix any $r \in \Delta \mathcal{A}$ and $\alpha \in (0,1]$.
		
		First, $p \succ q \implies \alpha p +(1-\alpha)r \succ \alpha q +(1-\alpha)r$.   If not, then $p\succ q$ and $\alpha q +(1-\alpha)r \succsim \alpha p +(1-\alpha)r$ for some $p,q$.
		Axioms WO, C, and WM imply there exists $p' \in  \Delta \mathcal{A}$ with $p' \sim p$.
		By NUI and WO, $q \alpha r \succsim p \alpha r \succsim p' \alpha r$.
		Let \[\tau = \sup \{ \beta \in [0,1]: q \beta r \succsim p' \beta r \}.\]
		By Lemma \ref{lem:continuity}, $q\tau r \succsim p' \tau r$.
		Since $p' \tau r \in  \Delta \mathcal{A}$, $(q\tau r)\beta q \succsim (p'\tau r)\beta q$ and $(q\tau r)\beta p' \succsim (p'\tau r)\beta p'$ by NUI for any $\beta \in (0,1)$.%
		\footnote{The remainder of the argument is based on one that appears in \cite{shapleybaucells98}.}
		
		Observe $(p'\tau r)\frac{1}{1+\tau} q =(q\tau r)\frac{1}{1+\tau} p'$.
		By WO,  \[
		q\frac{2\tau }{1+\tau} r=(q\tau r)\frac{1}{1+\tau} q \succsim 
		(p'\tau r)\frac{1}{1+\tau} q
		=(q\tau r)\frac{1}{1+\tau} p'  \succsim 
		(p'\tau r)\frac{1}{1+\tau} p'=p'\frac{2\tau }{1+\tau} r.\]
		Thus $\tau \geq 2\tau /(1+\tau)$, which can hold only if $\tau =1 $ or $\tau =0$.
		Since $\tau \geq \alpha >0$,  $\tau=1$. But then $q=q\tau r \succsim p' \tau r=p'$ by Lemma \ref{lem:continuity}, implying that $q\succsim p$ since $p \sim p'$. 
		
		Now, $p \sim q \implies \alpha p +(1-\alpha)r \succsim \alpha q +(1-\alpha)r$.\footnote{A symmetric argument obtains indifference.}
		Fix any $p \sim q$, and pick a lottery $x \in \Delta X$ s.t. $x \succsim p_{\vec{y}}$ for all plausible realizations $\vec{y}$ of $p$.
		For all $\epsilon\in (0,1]$,  $p \epsilon x \succ q$ by WM and WO, and so by the above, $(p \alpha r) \epsilon (x \alpha r)= (p \epsilon x)\alpha r \succsim q \alpha r$.
		By C, $p \alpha r \succsim q \alpha r$ as well. 
		
		Now, one has that $p \succ q \implies p\alpha r \succ q \alpha r$ and $p \succsim q \implies p \alpha r \succsim q \alpha r$. The second, combined with the contrapositive of the first and completeness,  is that  $p \succsim q \iff p \alpha r \succsim q \alpha r$.
		This completes the proof.
	\end{proof}
	
	\begin{lem}
		\label{lem:converse_action_IA}
		 Under Axioms \ref{ax: first}-\ref{ax: last}, if $p,q \in \Delta A$ and $r \in \Delta F$, then \[p \succsim  q \iff \alpha p+(1-\alpha)r\succsim \alpha q+(1-\alpha)r. \]
	\end{lem}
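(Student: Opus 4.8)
The plan is to reduce the claim entirely to Lemma~\ref{lem:action_IA} by replacing the arbitrary profile lottery $r$ with an action-lottery indifferent to it. The whole biconditional then follows from the full mixture-independence that Lemma~\ref{lem:action_IA} already supplies when the mixed-in lottery lies in $\Delta\mathcal{A}$, together with transitivity. Two ingredients are needed: (i) that $r$ admits an indifferent lottery $r' \in \Delta\mathcal{A}$, and (ii) that substituting $r$ by $r'$ inside the mixtures $p\alpha r$ and $q\alpha r$ preserves indifference.

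For (i), I first obtain, exactly as in the proof of Lemma~\ref{lem:action_IA}, a lottery $r' \in \Delta\mathcal{A}$ (in fact a lottery over constant outcomes) with $r' \sim r$; this is where Weak Order, Continuity, and Weak Monotonicity are used, the induced lotteries over plausible realizations bracketing $r$ from above and below and Continuity supplying the indifferent intermediate. For (ii), the key observation is that Lemma~\ref{lem:action_IA} may be invoked with the roles reversed: I apply it with the fixed, mixed-in action-lottery taken to be $p \in \Delta\mathcal{A}$ and with the two compared lotteries taken to be $r$ and $r'$. Since $r \sim r'$, the ``iff'' of Lemma~\ref{lem:action_IA}, used in both directions, yields $p\alpha r \sim p\alpha r'$; running the same argument with $q$ in place of $p$ gives $q\alpha r \sim q\alpha r'$. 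This is legitimate precisely because $p,q \in \Delta\mathcal{A}$, so each serves as an admissible mixing lottery for Lemma~\ref{lem:action_IA}, even though $r$ itself is an arbitrary element of $\Delta\mathcal{F}$.

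It then remains to chain these facts. Because $r' \in \Delta\mathcal{A}$, Lemma~\ref{lem:action_IA} applied in its original orientation gives $p \succsim q \iff p\alpha r' \succsim q\alpha r'$. Combining this with the two indifferences $p\alpha r \sim p\alpha r'$ and $q\alpha r \sim q\alpha r'$ through transitivity (Weak Order) yields $p \succsim q \iff p\alpha r \succsim q\alpha r$, which is the assertion. I expect the only real obstacle to be conceptual rather than technical: recognizing that Lemma~\ref{lem:action_IA} can be applied with the arbitrary profile lottery $r$ sitting in the ``compared'' slot and an action-lottery in the ``mixed-in'' slot, which is what permits swapping $r$ for $r'$. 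Once this dual use is spotted, every step is a direct citation of Lemma~\ref{lem:action_IA} and of the existence of action equivalents, with no fresh appeal to NUI required.
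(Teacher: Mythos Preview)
Your proposal is correct and follows essentially the same route as the paper's proof: pick $r'\in\Delta\mathcal{A}$ with $r'\sim r$ (via WO, C, WM), invoke Lemma~\ref{lem:action_IA} with $p$ (respectively $q$) in the mixed-in slot to obtain $p\alpha r\sim p\alpha r'$ and $q\alpha r\sim q\alpha r'$, then invoke Lemma~\ref{lem:action_IA} once more with $r'$ in the mixed-in slot and chain by transitivity. Your write-up is more explicit about the role-reversal in applying Lemma~\ref{lem:action_IA}, but the argument is identical.
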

	\begin{proof}
		Pick $r' \in \Delta A$ with $r' \sim r$; this exists by WO, C,and WM.
		Observe $ p\alpha r \sim p\alpha r'$ and $ q \alpha  r' \sim q\alpha  r$ by Lemma \ref{lem:action_IA}.
		Then,   $p \succsim q$ $\iff$ $ p \alpha  r' \succsim q \alpha  r'$ also by Lemma \ref{lem:action_IA}. 
		By WO, $p \alpha r' \succsim q \alpha r'$ $\iff$ $ p \alpha  r \succsim q \alpha  r$.
	\end{proof}
	
	\begin{lem}
		\label{lem:utility_index}
		 Under Axioms \ref{ax: first}-\ref{ax: last}, an affinely unique, continuous utility index $u$ for lotteries over $X$ exists.
	\end{lem}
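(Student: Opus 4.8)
The plan is to restrict $\succsim$ to the lotteries over outcomes $\Delta X \subseteq \Delta\mathcal{A}$, verify that this restriction satisfies the classical mixture‑space (von Neumann–Morgenstern) axioms, and then read off an affinely unique affine representation from the Herstein–Milnor theorem. The whole algebraic content has in fact already been supplied by the preceding lemmas; what remains is mostly bookkeeping.

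First I would note that $\Delta X$ is a convex mixture space and that $\succsim$ restricted to it inherits completeness and transitivity from Weak Order. The essential observation is that \emph{full} independence holds on $\Delta X$: for any $p,q,r\in\Delta X$ and $\alpha\in(0,1]$, Lemma \ref{lem:action_IA} applies verbatim, since $r\in\Delta X\subseteq\Delta\mathcal{A}$, yielding $p\succsim q\iff\alpha p+(1-\alpha)r\succsim\alpha q+(1-\alpha)r$. Thus on lotteries over outcomes the weakening in NUI costs nothing and the ordinary independence axiom is recovered. For the Archimedean/mixture‑continuity condition I would invoke Lemma \ref{lem:continuity}, which makes the relevant mixture sets closed, so that whenever $p\succ q\succ s$ in $\Delta X$ there are $\alpha,\beta\in(0,1)$ with $\alpha p+(1-\alpha)s\succ q\succ\beta p+(1-\beta)s$. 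With weak order, independence, and this condition in hand, the mixture‑space theorem produces an affine $U:\Delta X\to\mathbb{R}$ representing $\succsim$, unique up to a positive affine transformation. Setting $u(x)=U(\delta_x)$ for the degenerate lottery $\delta_x$ at $x$, affinity gives $U(p)=\mathbb{E}_p[u]$ for every finitely supported $p\in\Delta X$, and the uniqueness of $U$ transfers directly to affine uniqueness of $u$.

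The one genuinely delicate point is the continuity of $u$ as a function on the unbounded line $X=\mathbb{R}$, since the profile topology is discrete away from the constant outcomes $X^*$ and only $X^*$ carries the usual metric. By Simple Monotonicity, $u$ is strictly increasing, so it suffices to rule out jumps. Suppose $u$ had an upward jump at $x$, i.e. $L:=\inf_{y>x}u(y)>u(x)$. Choosing $a<x<b$ and $\lambda\in(0,1)$ with $\lambda u(a)+(1-\lambda)u(b)=v$ for some $v\in(u(x),L)$, the lottery $\ell=\lambda\delta_a+(1-\lambda)\delta_b$ satisfies $U(\ell)=v$, so $\ell\succ\delta_x$, while every $y>x$ has $u(y)\ge L>v$, giving $\delta_y\succ\ell$. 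Taking $y_n\downarrow x$ one has $\delta_{y_n}\to\delta_x$ by the clause of Continuity (C) noted in the text, and $\delta_{y_n}\succsim\ell$ for all $n$, so Continuity forces $\delta_x\succsim\ell$, contradicting $\ell\succ\delta_x$. The symmetric argument rules out a jump from the left; hence $u$ is continuous.

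I expect this continuity argument, with its straddling‑lottery construction and the endpoint bookkeeping forced by the nonstandard topology and the unboundedness of $X$, to be the only part requiring real care. Everything upstream of it follows mechanically: independence on $\Delta X$ is exactly Lemma \ref{lem:action_IA}, the Archimedean property is Lemma \ref{lem:continuity}, and the existence and affine uniqueness of $U$ (hence of $u$) are the standard mixture‑space representation.
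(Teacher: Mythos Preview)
Your proposal is correct and follows the same skeleton as the paper's proof: use Lemmas~\ref{lem:continuity} and~\ref{lem:action_IA} to verify the mixture-space axioms on $\Delta X$, obtain the affine representation, and then establish continuity of $u$. The paper compresses all of this into a one-line appeal to the preceding lemmas together with \cite{Grandmont1972}, whose theorem packages exactly the passage from weak*-continuity of $\succsim$ on lotteries to continuity of the vNM index. Your explicit jump argument via a straddling lottery and the sequence $\delta_{y_n}\to\delta_x$ is a correct, self-contained substitute for that citation; it exploits precisely the clause of the topology highlighted in Section~\ref{sec:framework} that $(1,\langle x_n\rangle)\to(1,\langle x\rangle)$ when $x_n\to x$, together with Simple Monotonicity to guarantee strict monotonicity of $u$ and the existence of the straddling points $a<x<b$ on $X=\mathbb{R}$. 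So the two proofs differ only in presentation: the paper outsources the continuity step, while you write it out by hand.
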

	
	\noindent
	This follows from the above lemmas and \cite{Grandmont1972}.
	
	\begin{proof}[Proof of Theorem \ref{thm:basic_rep}]
		Necessity is trivial. For sufficiency, assume that $\succsim$ satisfies the axioms.
		By the above lemmas, 
		Normalize so that $range(u) \supset [-1,1]$, and let $0$ denote a lottery with $u(0)=0$.
		Recall that $B_0 (X,\Sigma)$ is the simple, real-valued, $\Sigma$-measurable functions on $X$.
		For any $p \in \mathcal{F}$ define $f_p \in B_0(\Omega^\mathcal{A},\otimes_{a \in \mathcal{A}} \sigma(a))$ so that \[f_p = \vec{\omega} \mapsto \sum_{p( \langle a_{i}\rangle)>0)}   p\big(\langle a_{i}\rangle_{i=1}^{n}\big) u \left(\sum_{i=1}^{n}a_i(\omega^{a_{i}}) \right) . \] 
		By WM, $f_p \geq f_q$ (resp., $f_p \gg f_q$)  implies that $p \succsim q$ (resp., $p\succ q$).
		
		Define $W=\{f_{p}:p\in\Delta(\mathcal{F})\}$, noting that $W$ is convex. For
		$\phi$ in $W$, define $\tilde{I}(\phi)=\int u(x)dr$ for some $p\in\Delta(\mathcal{F})$
		s.t. $f_{p}=\phi$ and a lottery $r$ over $X$ satisfying $r\sim p$. 
		Such an $r$ exists for every $p$ by Weak Monotonicity, Completeness, and Continuity, so
		$\tilde{I}$ is well-defined.
		Denote $f_0=0$. 
		
		The function $\tilde{I}$ has the following properties for any $\alpha \in (0,1]$, $\phi =f_q$, $\psi = f_r$ and $g=f_p$ with $p \in \Delta A $, $q,r \in \Delta F$, $x \in \Delta X$ with $u(x)=x$ and $f_x=x$.		
		\begin{itemize}
			\item $\tilde{I}(\cdot)$ is normalized: $\tilde{I}(x)=x$ by construction. Based on this, we  abuse notation slightly by also identifying $\tilde{I}(\theta)$ with a lottery over $X$ that yields utility $\tilde{I}(\theta)$ for any $\theta \in W$.
			\item $\tilde{I}(\cdot)$ is monotone: $\phi \geq \psi$ implies $\tilde{I}(\phi)\geq \tilde{I}(\psi)$. Follows from WM.
			
			\item $\tilde{I}(\cdot)$ is action invariant: 
			$\tilde{I}( \alpha \psi +(1-\alpha)g ) = \alpha \tilde{I}(\psi)+(1-\alpha)\tilde{I}(g)$.
			To see this, note that by Weak Monotonicity and Continuity, there exists  $z\in \Delta X$ such that $r \sim z$; for this $z$, $\int u(x)dz= \tilde{I}(\psi)$. 
			By  Lemma \ref{lem:converse_action_IA}, $r \alpha p \sim z \alpha p$. 
			Since $z\alpha p \in \Delta X$ and $f_{r \alpha p} =\alpha \psi +(1-\alpha)g$, $\tilde{I}(\alpha \psi +(1-\alpha)g) = \int u(x) d[z\alpha p]= \alpha \tilde{I}(\psi) + \alpha  \tilde{I}(g)$.
			
			\item $\tilde{I}(\cdot)$ is concave: $\tilde{I}(\alpha  \phi +(1-\alpha) \psi)\geq \alpha \tilde{I}( \phi) +(1-\alpha) \tilde{I}(\psi)$.
			To see this, note  $q \sim  \tilde{I}(\psi)$. By NUI, $r \alpha q \succeq r \alpha \tilde{I}(\psi)$.
			Since $\tilde{I}$ is action invariant and normalized, \[\tilde{I}( \alpha \phi+(1- \alpha) \tilde{I}(\psi))=\alpha \tilde{I}(\phi)+(1-\alpha) \tilde{I} (\psi).\]
			
			\item $\tilde{I}(\cdot)$ is Homogeneous of Degree 1: $\tilde{I}(\alpha \psi)=\alpha \tilde{I}(\psi)$. 
			This follows from action invariant and normalized.	
			
			\item $\tilde{I}(\cdot)$ is supnorm continuous. Suppose $\phi^n \rightarrow \phi$ for some sequence $\phi^n$ and $\phi$ that belong to $W$. Let $x^n = \max_{\vec{\omega}} [\phi^n(\vec{\omega})-\phi(\vec{\omega}) ]$ and $y^n = \max_{\vec{\omega}} [\phi^n(\vec{\omega})-\phi(\vec{\omega}) ]$.
			Pick $\kappa,\kappa' \in \Delta X$ with $u(\kappa)=1$ and $u(\kappa')=-1$.
			For $n$ large enough that $|x_n|,|y_n|<1$, \[
			\tilde{I} \left(y_n [\frac12\phi +\frac12 f_{\kappa'}] +(1-y_n) [\frac12\phi +\frac12 0]\right) 
			\leq \tilde{I} \left(\frac12\phi^n +\frac12 0 \right)
			\]	and \[
			\tilde{I} \left(\frac12\phi^n +\frac12 0 \right) 
			\leq \tilde{I} \left(x_n[\frac12 \phi + \frac12 f_{\kappa'}]+ (1-x_n) [\frac12\phi+ \frac12 0] \right)
			\] since $\tilde{I}$ is monotone. 
			By continuity and that $$z_n \left[ \frac12 q +\frac12  \kappa''  \right] +(1-z_n)\left[ \frac12 q +\frac12 0 \right] \rightarrow \frac12 q +\frac12 0$$ for any $\kappa'' \in \Delta X$ in the weak* topology whenever $z_n \rightarrow 0$,  $\tilde{I}(\phi \frac12 0) = \lim \tilde{I}(\phi^n \frac12 0)$. Action independence then gives the result.
		\end{itemize}
		
		Given the above and that $0 \in W$, extend $\tilde{I}$ to the cone generated by $W$ (which is simply denoted by $\tilde{I}$ and $W$ for convenience) using the identity that $\tilde{I}(\alpha \phi)=\alpha \tilde{I}(\phi)$. Clearly, all the above properties are maintained.
		The set $W$ is a convex cone contained in the vector space $B(\Omega^\mathcal{A},\otimes_{a \in \mathcal{A}} \sigma(a)) = W^*$, the bounded, $\otimes_{a \in \mathcal{A}} \sigma(a)$-measurable functions.
		Extend $\tilde{I}$ to $W^*$ as follows.
		
		For any $x \in W^*$, define \[
		I(x)= \sup \left\{ \tilde{I}(w):  x \geq w,\ w\in W \right\}.
		\]
		The function $I$ inherits the following properties from $\tilde{I}$:
		\begin{itemize}
			\item $\phi \in W$ implies $I(\phi)=\tilde{I}(\phi)$: 
			First,  $\phi \in W$ and $\phi \leq \phi$ immediately imply that  $I(\phi) \geq \tilde{I}(\phi)$. 
			Second, $w \leq \phi$ immediately yields $ \tilde{I}(w) \leq \tilde{I}(\phi)$ by monotonicity of $\tilde{I}$, so $\tilde{I}(\phi) \geq I(\phi)$ also.
			
			\item $I$ is concave: fix $\phi,\psi \in W^*$ and $\lambda \in (0,1)$.
			For any $\epsilon>0$, there exist $w_1,w_2 \in W$ with $w_1 \leq \phi$ and $w_2 \leq \psi$ such that $\tilde{I}(w_1)>I(\phi)-\epsilon/2$ and $\tilde{I}(w_2)>I(\psi)-\epsilon/2$. Now, $\lambda w_1 +(1-\lambda) w_2 \leq \lambda \phi +(1-\lambda)\psi $. Then, $I(\lambda \phi +(1-\lambda) \psi) \geq\tilde{I}(\lambda w_1 +(1-\lambda) w_2)\geq \lambda \tilde{I}(w_1)+(1-\lambda) \tilde{I} (w_2)>\lambda I(\phi)+(1-\lambda) I(\psi)-\epsilon$. Letting $\epsilon$ go to zero establishes the result.
			
			\item 
			$I$ is Monotone and $I(x) < \infty $ for all $x$:
			Monotone is trivial. Since $y=\min_{\omega} x(\omega)\leq x$ belongs to $W$, $I(x)> \tilde{I}(y)=y$. Letting $z= \max_{\omega} x(\omega)$, for any $w \in W$ with $x \geq w$,
			$z \geq  w$. Thus $z = \tilde{I}(z) \geq \tilde{I}(w)$ by monotonicity; hence $I(x) \leq z$.

			\item $I$ is Homogeneous of Degree 1: fix $x \in W^*$ and $\alpha >0$. 
			If $\alpha I(x) > I(\alpha x)$, then there is $w \in W$ such that
			$x\geq w $ and $\alpha  \tilde{I}(w)> I(\alpha x)$.
			Observe that $\alpha w\leq \alpha x$, $\alpha w \in W$ and so $\tilde{I}(\alpha w) = \alpha \tilde{I}(w)$, immediately leading to a contradiction; reversing the argument leads to a contradiction if $\alpha I(x) < I(\alpha x)$.
			
			\item $I$ is action invariant: $I(\alpha \phi  +(1-\alpha) g)= \alpha I(\phi )+ (1-\alpha) I(g)$ when $g = f_p$ for $p \in \Delta A$. 
			Notice that $w \in W$ $\iff$ $ \alpha w  +(1-\alpha) g \in W$, 
			and that if  $\phi \geq w$, then $\alpha \phi  +(1-\alpha) g \geq \alpha w +(1-\alpha) g$. The rest follows from $\tilde{I}$ being action invariant.	
			
			\item $I$ is supnorm continuous: Suppose not, so $x_n \rightarrow x$ in supnorm and, first, $\lim  \inf I(x_n)< I(x)$.
			There is $\epsilon>0$ and a sub-sequence, WLOG the whole sequence, such that $ I(x_n)+\epsilon<I(x)$ for all $n$.
			By definition, there exists $x \geq  w \in W$ such that $\tilde{I}(w)\geq I(x)-\epsilon/3$.
			Also, for $n$ large enough, $x_n\geq x -\epsilon/3$ in every state.
			Thus $x_n \geq w - \epsilon/3$, but then
			$I(x_n)\geq \tilde{I}(w -\epsilon/3)=\tilde{I}(w)-\epsilon/3 \geq I(x) -2 \epsilon/3$, a contradiction.
			Second, if  $\lim \sup I(x_n)> I(x)$, then there exists $\epsilon>0$  a sub-sequence, WLOG the whole sequence, such that $ I(x_n)>I(x)+\epsilon$ for all $x_n$. Pick $n$ such that $x\geq x_n-\epsilon/3$.
			There exist $x_n \geq  w \in W$ such that $\tilde{I}(w)\geq I(x_n)-\epsilon/3$.
			Then $x\geq w-\epsilon/3$ and $I(x)\geq \tilde{I}(w-\epsilon/3)\geq I(x_n)-2\epsilon/3>I(x)$, a contradiction.
		\end{itemize}
		
		
		To finalize the proof, I adapt the \cite{GilboaSchmeidler1989} (GS) arguments to construct a set of priors representing the preference as in GS but with the additional property that $\int  f_{(1,\langle a \rangle)} d\pi =\int f_{(1,\langle a \rangle)} d\pi'$ for all $\pi,\pi' \in \Pi$ for any  $a \in \mathcal{A}$.
		Let $W_A$ be the cone generated by $\{f_p:p \in \Delta(\mathcal{A})\}$.

		For any $\phi  \in W^*$ with $I(\phi)>0$, define $D_1=\{\psi \in W^*:I(\psi) > 1\}$ and \[D_2 = co(\{\psi \in W^*: \psi \leq a \text{, }I(a)=1 \text{, and }a\in W_A \} \bigcup \{\psi \in W^*: \psi \leq \phi/I(\phi)\}). \]
		To apply the GS arguments, I show that $D_1 \bigcap D_2 = \emptyset$.
		By $I$ action invariant and convexity of the constituent sets, any $d_2 \in D_2$ equals $\alpha a_1 +(1-\alpha) a_2$ where $a_1 \leq a$ for $a \in W_A$ and $I(a)=1$, $a_2 \leq \phi/I(\phi)$ and $\alpha \in [0,1]$
		Then, $I(d_2) \leq I(\alpha a +(1-\alpha) a_2)$ by WM, which equals $ \alpha I(a)+(1-\alpha)I(a_2)$ by action invariant, which is less than \[
		\alpha I(a) +(1-\alpha) I(\phi/I(\phi))=1
		\] by WM.
		Conclude $I(d_2) \leq 1$ for any $d_2 \in D_2$ and hence $D_1 \bigcap D_2 = \emptyset$.
		Moreover, note $1 \in D_2$ and $1$ in $cl(D_1)$.
		A separating hyperplanes argument gives a finitely additive measure $\pi_{\phi}$ such that $\int d_1 d\pi_{\phi} \geq 1 \geq \int d_2 d\pi_{\phi}$ for all $d_1 \in D_1$ and $d_2 \in D_2$.
		
		Applying the GS arguments shows that $\pi_{\phi}$ is a finitely additive probability measure, $I(\phi)=\int \phi d \pi_{\phi}$, and $ \int \psi d \pi_{\phi} \geq I(\psi) $ for all $\psi \in W^*$.
		This $\pi_{\phi}$ must have $\int a d \pi_{\phi} = I(a)$ for all $a \in W_A$, since $I(a/I(a))=1$ implies that $a/I(a) \in D_2$ and $1 \geq \int a/I(a) d\pi_{\phi}$.
		As in GS,  for $\Pi = \bar{co} \{ \pi_{\phi}:I(\phi)>0\}$, $p \succsim q$ if and only if \[
		\min_{\pi \in \Pi} \int f_pd\pi \geq \min_{\pi \in \Pi} \int f_q d\pi.
		\]
		completing the proof.
	\end{proof}
	
	\begin{proof}[Proof of Theorem \ref{thm:CC_rep}]
		By arguments as in EP, a unique coarsest correlation cover $\mathcal{U}$ exists.
		For any $C:\mathcal{A} \rightarrow \mathcal{U}$, let $f_p^C \in B_0(\Omega^\mathcal{U},\Sigma^{\mathcal{U}})$ be defined by\[
		f_p^C=\sum_{p( \langle a_{i}\rangle)>0)}   p\big(\langle a_{i}\rangle_{i=1}^{n}\big) u \left(\sum_{i=1}^{n}a_i \left( \omega^{C(a_{i})} \right) \right) 
		\]
		The result follows from the same arguments as in Theorem \ref{thm:basic_rep} if $f^{C_1}_p \geq f^{C_2}_q$ implies that $p \succeq q$ for any $C_1,C_2:\mathcal{A} \rightarrow \mathcal{U}$ with $a \in C_i(a)$ for $i=1,2$.
		
		Write $\Omega= \{1,\dots,K\}$  and $Na$ for $N$ copies of the action $a$, where $N$ is a positive integer. 
		For $x\in X$ and $B \in \mathcal{U}$ choose an action $\beta_{x}^{B,k}\in B$ so that $\beta_{x}^{B,k}(\omega)$ equals $x$ if $\omega = k$ and $0$ otherwise and define the corresponding event
		\[
		\mathcal{E}^{B,k,x}=\{\vec{\omega}\in\Omega^{\mathcal{A}}:\omega^{\beta
			_{x}^{B,k}}\in E_{B}^{k}\}.
		\]
		Note such actions exist because $B$ is rich. 
		Let $\Theta_{\varepsilon}=(-\epsilon,0)\bigcup (0,\varepsilon)$, i.e. an open interval of size $\varepsilon$ around $0$ that excludes $0$. 
		
		Restricted to $\Delta A$, $\succsim$ satisfies the Anscombe-Aummann axioms and it is easy to verify that there exists $\mu \in \Delta \Omega$ so that for $p,q \in \Delta A$, $p \succsim q$ if and only $U(p)\geq U(q)$ where $U(p)=\int_{\Omega} \mathbb{E}_p(\langle a \rangle) [u(a(\omega))] d \mu $.
		\begin{lem}
			\label{LemmaSmallBets} Suppose that there exist $x,y\in X$ such that $u(x+y)+u(0)\neq
			u(x)+u(y)$. There exists $\varepsilon>0$ such that for every collection $\{\beta_{x_1}^{B_1,k_1},\dots,\beta_{x_n}^{B_n,k_n}\}$ with $x_i \in \Theta_{\varepsilon}$, $B_i \in \mathcal{U}$, and $k_i \in \Omega$ for each $i$, and any $p \in \mathcal{F}$ there exists \[
			\pi_0 \in \arg \min_{\pi \in \Pi } \int f_p d\pi
			\] such that
			\begin{align}
			&\pi_0(\mathcal{E}^{B_i,k_i,x_i})= \mu (k_i) \label{eq:marg}\\
			k_i \neq k_j, \ B_i = B_j  \implies & \pi_0(\mathcal{E}^{B_i,k_i,x_i} \bigcap \mathcal{E}^{B_j,k_j,x_j}) = 0\label{pcrpf1}\\
			\text{and }  
			k_i = k_j, \ B_i = B_j  \implies & \pi_{0} \left( \mathcal{E}^{B_i,k_i,x_i} \bigcap \mathcal{E}^{B_j,k_j,x_j} \right)  = \mu(k_i)\label{pcrpf2}
			\end{align}
			for all distinct $i,j\in\{1,\dots,K\}$ and every $B \in \mathcal{U}$.
		\end{lem}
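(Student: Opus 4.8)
The plan is to produce, for the given $p$, a single measure $\pi_0$ that minimizes $\int f_p\,d\pi$ over $\Pi$ and whose behavior on the bet coordinates is synchronized within each class; conditions \eqref{pcrpf1} and \eqref{pcrpf2} then say exactly that all bets lying in a common class $B$ are measurable with respect to one copy of $\Omega$, while \eqref{eq:marg} will in fact hold for every $\pi\in\Pi$. First I would dispatch \eqref{eq:marg}: since $\beta^{B_i,k_i}_{x_i}$ pays $x_i$ on $\mathcal E^{B_i,k_i,x_i}$ and $0$ elsewhere, the marginal-agreement clause of the CCR gives $u(x_i)\,\pi(\mathcal E^{B_i,k_i,x_i})=u(x_i)\,\mu(k_i)$ for every $\pi\in\Pi$; as $x_i\in\Theta_\varepsilon$ is nonzero and $u$ is strictly increasing with $u(0)=0$, we have $u(x_i)\neq 0$, so \eqref{eq:marg} is immediate.

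The engine of the argument is a pairwise computation driven by understanding. Fix two bets $\beta^{B,k_i}_{x_i},\beta^{B,k_j}_{x_j}$ in the same class $B$, and let $c\in B$ be the action (available by richness of $B$) with $c=\beta^{B,k_i}_{x_i}+\beta^{B,k_j}_{x_j}$ pointwise on $\Omega$. Because $\succsim$ understands $B$, for every $B$-synchronous realization the profile $\langle\beta^{B,k_i}_{x_i},\beta^{B,k_j}_{x_j}\rangle$ and the single action $\langle c\rangle$ induce the same constant outcome, so Definition~\ref{def:understand} forces $\langle\beta^{B,k_i}_{x_i},\beta^{B,k_j}_{x_j}\rangle\sim\langle c\rangle$ and the representation value of the profile equals $\int_\Omega u(c)\,d\mu$, its synchronized value. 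Writing $\Delta=u(x_i+x_j)-u(x_i)-u(x_j)$ for the interaction term and using the marginals just pinned down, a direct expansion gives
\[
\int f_{\langle\beta^{B,k_i}_{x_i},\beta^{B,k_j}_{x_j}\rangle}\,d\pi=(\text{synchronized value})+\pi\big(\mathcal E^{B,k_i,x_i}\cap\mathcal E^{B,k_j,x_j}\big)\,\Delta
\]
when $k_i\neq k_j$, with an entirely analogous identity when $k_i=k_j$. Since the minimum of the left side equals the synchronized value and $\Delta\neq0$, every minimizer of this pair profile must drive the coupling to its synchronized value---$0$ when $k_i\neq k_j$ and $\mu(k_i)$ when $k_i=k_j$---which is precisely \eqref{pcrpf1}--\eqref{pcrpf2} for that pair. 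It is here that I would use the hypothesis: choosing $\varepsilon$ via continuity of $u$ from the non-additive pair it supplies, so that the interaction $\Delta$ does not vanish at the arguments in play and so that the bets perturb $f_p$ only negligibly.

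The remaining task is to upgrade these per-pair, per-profile conclusions to one $\pi_0$ that simultaneously minimizes $\int f_p\,d\pi$ and synchronizes every bet. The clean route is to synchronize through a common coordinate: exhibit a minimizer of $f_p$ in the image of the diagonal-within-classes embedding $\Phi:\Omega^{\mathcal U}\to\Omega^{\mathcal A}$, $\vec\omega\mapsto(\omega_{C(a)})_{a\in\mathcal A}$. Any such measure satisfies \eqref{pcrpf1}--\eqref{pcrpf2} for all pairs at once---including $\Delta=0$ pairs, by transitivity of equality of coordinates, which the pairwise step does not touch. To obtain it I would form the auxiliary lottery $\mathbf q$ mixing uniformly over the pair profiles above, minimize $(1-t)\int f_p\,d\pi+t\int f_{\mathbf q}\,d\pi$ over the weak*-compact convex set $\Pi$, and let $t\downarrow 0$. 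By the pairwise identity, $\int f_{\mathbf q}\,d\pi$ is minimized exactly at priors synchronizing every pair, i.e.\ at $\Phi$-synchronized priors; understanding of each class supplies such priors inside $\Pi$, so the perturbed problems select them, and upper hemicontinuity of the argmin correspondence together with weak*-closedness of the synchronization constraints returns a limit point $\pi_0\in\arg\min_{\pi\in\Pi}\int f_p\,d\pi$ that is synchronized.

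The hard part will be the interface between understanding and the derived set $\Pi$: understanding is a statement about the preference, hence directly only about the value $V=\min_\pi\int f\,d\pi$, whereas \eqref{pcrpf1}--\eqref{pcrpf2} are statements about an individual minimizing measure. The crux is thus to show that a $\Phi$-synchronized prior actually attains $V(p)$ and survives as a minimizer after the small bet perturbation---that is, to promote the value-level understanding to a selected, jointly synchronized minimizer. The interaction identity (valid wherever the hypothesis yields $\Delta\neq0$, which is what fixes $\varepsilon$) is exactly the device that makes this promotion work pair by pair, and weak*-compactness of $\Pi$ supplies the limit; granting this, \eqref{eq:marg}, the pairwise identity, and the reduction of \eqref{pcrpf1}--\eqref{pcrpf2} to synchronization through a common coordinate are all routine.
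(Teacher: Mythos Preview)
Your treatment of \eqref{eq:marg} is correct and in fact sharper than needed: the marginal clause of the CCR already gives $\pi(\mathcal E^{B_i,k_i,x_i})=\mu(k_i)$ for every $\pi\in\Pi$, not just for a selected minimizer. The pairwise interaction identity is also correct. The gap is in the perturbation step that is supposed to transfer synchronization from the pair profiles to a minimizer of $f_p$. Minimizing $(1-t)\int f_p\,d\pi+t\int f_{\mathbf q}\,d\pi$ and sending $t\downarrow 0$ only selects, among minimizers of $f_p$, one that minimizes $\int f_{\mathbf q}\,d\pi$ on that face; nothing forces that selection to be synchronized unless a synchronized minimizer of $f_p$ already exists---which is precisely the conclusion you are after. (If $\arg\min_{\pi\in\Pi}\int f_p\,d\pi=\{\pi^*\}$ with $\pi^*$ not synchronized, then $\pi_t\to\pi^*$ regardless of $\mathbf q$.) The paper closes this gap by a different mechanism: it mixes $p$ with the single-action lottery $p_1\in\Delta\mathcal A$ and with its understood profile counterpart $p_2$, and invokes the action-invariance of $I$ established from NUI (Lemma~\ref{lem:action_IA}) to get $I(\tfrac12 f_p+\tfrac12 f_{p_1})=\tfrac12 I(f_p)+\tfrac12 I(f_{p_1})$ \emph{exactly}. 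Understanding forces this to equal $I(\tfrac12 f_p+\tfrac12 f_{p_2})$, and evaluating the latter at an inductively obtained minimizer $\pi_0$ yields $I(f_p)-\int f_p\,d\pi_0=\pi_0(\mathcal E^i\cap\mathcal E^j)\cdot D$; this single equation simultaneously forces the coupling and certifies that $\pi_0$ minimizes $f_p$. The induction runs over pairs, with no limiting argument. Your proposal never invokes action invariance, and without it the compactness-based selection cannot do the promotion you describe.

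There is a second gap in the choice of $\varepsilon$. The hypothesis supplies one nonadditive pair $(x,y)$; continuity of $u$ does not transfer this to $\Delta(x_i,x_j)\neq 0$ for arbitrary $x_i,x_j\in\Theta_\varepsilon$, since $u$ may be affine on a neighborhood of $0$ while failing additivity elsewhere. The paper never relies on $\Delta(x_i,x_j)$ directly: it replaces $\beta_i,\beta_j$ by $N$ and $M$ copies and adds a constant shift $z_0$, and shows (following EP) that $N,M,z_0$ can be chosen so that $u(Nx_i+Mx_j+z_0)+u(z_0)-u(Nx_i+z_0)-u(Mx_j+z_0)$ is bounded away from zero for all nonzero $x_i,x_j$ in a fixed $\Theta_\varepsilon$. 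This scaling device is what makes $\varepsilon$ exist, and it is absent from your sketch.
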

		
		In words,  for any $\theta$, there is minimizing probability measure $\pi_0$ with the following properties.
		Eq (\ref{eq:marg}) requires that the marginals of $\pi_0$ agree with $\mu $.
		Eq (\ref{pcrpf1}) implies that the DM believes it impossible that bets on distinct states in the same class pay off jointly.
		Eq (\ref{pcrpf2}) implies that if one bet on state $i$ pays off, then all bets on state $i$ in the same class pay off. 
		In sum, within the same understanding class, all the bets on one and only one of the elements of its finest partition pay off jointly.
		
		\begin{proof}
			[Proof of Lemma \ref{LemmaSmallBets}]
			Following the proof of  Lemma 4 from EP, for any non-zero $x',y'\in (-\varepsilon,\varepsilon)$ for $\varepsilon>0$ small enough, the absolute value of
			\begin{equation}
			u(Nx^{\prime}+ My^{\prime}+ z_{0})+ u(z_{0})
			-u(Nx^{\prime}+ z_{0}) -u(My^{\prime}+ z_{0})  \label{eq: non-linear}%
			\end{equation} 
			is sufficiently close to $u(x+y)+u(0)-u(x)-u(y)$ for some positive integers $N$ and $M$ and an appropriately chosen $z_0$. In particular, one can find $\varepsilon>0$ so that (\ref{eq: non-linear}) does not equal zero for every non-zero $x',y'\in (-\varepsilon,\varepsilon)$.	
			
			To ease notation, set $\beta_i=\beta^{B_i,k_i}_{x_i}$ and
			$\mathcal{E}^i=\mathcal{E}^{B_i,k_i,x_i}$.
			
			First, observe that for $\pi \in \Pi$, $\pi (\mathcal{E}^i) \geq \mu(k_i)$, since
			\[
			\mu(k_i)u(x_i)=I(\beta_i)=\min_{\pi \in \Pi}\pi (\mathcal{E}^i)u(x_i). \]
			Second, for any $p \in \Delta \mathcal{F}$,  there exists $\pi \in \arg\min_{\pi \in \Pi} \int f_p d\pi $ with $\pi (\mathcal{E}^i) = \mu (k_i)$ for all $i$.
			To see why, pick arbitrary  $p$ and note $\alpha I( f_p) +(1- \alpha)I(f_{\beta_i}) =I(\alpha f_p+(1- \alpha)f_{\beta_i})$ since $I$ is action independent for $i=1$.
			The former equals $\alpha I(f_p)+(1- \alpha) \mu(k_i)u(x_i)$.
			The latter equals $\int [\alpha f_p]d\pi+ \pi (\mathcal{E}^i)u(x_i)$ for some $\pi \in \Pi$.
			If $\pi (\mathcal{E}^i)>\mu (k_i)$, then $\int f_p d\pi <I(f_p)=\min_{\pi'\in \Pi} \int f_p d\pi'$, contradicting the definition of $I$.
			Conclude there is a minimizer with  $\pi (\mathcal{E}^1) =\mu(k_1)$.
			Now, suppose for $n$ there is  $\pi \in \arg\min_{\pi \in \Pi} \int f_p d\pi $ with $\pi (\mathcal{E}^i) = \mu(k_i)$ for $i<n$.
			Repeat the above arguments with $i=n$, but choose $\pi$ to be the minimizer claimed by the IH. Conclude that this minimizer must also have $\pi (\mathcal{E}^n) = \mu(k_n)$.
			Induction implies this must be the case for all $\pi(\mathcal{E}^i)$.
			Hence, for any $p$, there is a minimizer satisfying Equation (\ref{eq:marg})
			
			Third, claim that this minimizer can also be taken to have $\pi_0(\mathcal{E}^i \bigcap \mathcal{E}^j)=0$ when $B_i=B_j$ and $k_i \neq k_j$.
			There are a finite number of these pairs of events; order them arbitrarily.
			Assume (IH) that there is a minimizer $\pi_0$ for $p$ satisfying Eq (\ref{eq:marg}) and for which Eq (\ref{pcrpf1}) also holds for the first $n-1$ pairs.
			The base case holds by step 2.
			
			Let $(i,j)$ be  pair $n$ and $p$ be any given lottery.			 
			Since $x_i,x_j \in \Theta_{\varepsilon}$, by the above, there exists $N,M,z_0$ such that \[ u(N x_i+ M x_j+z_0)+u(z_0)-u(N x_i+z_0)-u(M x_j+z_0) = D \neq 0.\]
			Define lotteries  \[
			p_1 \equiv \left(\frac{1}{2}, \langle N\beta^{i},z_{0}\rangle; \frac{1}{2},\langle M\beta^{j},z_{0}\rangle \right)\]
			and \[p_2 \equiv  \left( \frac{1}{2},\langle N \beta^{i},M\beta^{j},z_{0}\rangle;\frac{1}{2}z_{0}\right).
			\] 
%
			Since $B$ understood,
			\[
			q_1=\frac12 p +\frac12 p_1
			\sim \frac12 p +\frac12 p_2=q_2
			\] 
			Since $I$ is action independent,
			\[I(f_{q_1})=\frac12 I(f_p)+\frac12 I(f_{p_1})\]
			and 	\[
			I(f_{p_1})=\mu(k_i)[u(Nx_i+z_{0})-u(z_{0})]+\mu(k_j)[u(Mx_j+z_{0})-u(z_{0})]+u(z_{0})
			\]
			By IH,   there exists $\pi_0 \in C$ satisfying (\ref{eq:marg}) so that 
			\begin{align*}
			I(f_{q_2})=&\int f_{q_2} d\pi_0\\
			=&\frac12  \int f_p d\pi_0+ \frac12 \pi_{0}(\mathcal{E}^{j} \bigcap \mathcal{E}^{i})[u(Nx_i+Mx_j+z_{0})-u(z_{0}
			)]+\frac12 u(z_{0})\\
			&\qquad+\frac12[\pi_{0}(\mathcal{E}^{i})-\pi_{0}(\mathcal{E}^{j} \bigcap \mathcal{E}^{i})][u(Nx_i+z_{0})-u(z_{0})] \\
			&\qquad+\frac12[\pi_{0}(\mathcal{E}^{j})-\pi_{0}(\mathcal{E}^{j} \bigcap \mathcal{E}^{i})][u(M x_j+z_{0})-u(z_{0})]\\
			=&\frac12  \int f_p d\pi_0 +\frac12 I(f_{p_1}) + \frac12 \pi_{0}(\mathcal{E}^{j} \bigcap \mathcal{E}^{i}) D
			\end{align*}
			If $\pi_{0}(\mathcal{E}^{j} \bigcap \mathcal{E}^{i})>0$, then $I(f_{q_1})\neq I(f_{q_2})$, contradicting the claimed indifference.
			Conclude the IH holds for the first $n$ pairs as well.
			Conclude by induction that there is a minimizer satisfying Eq (\ref{pcrpf1}) for any $p \in \Delta(\mathcal{F})$.
			
			Fourth, suppose $\beta_i,\beta_j \in B \in \mathcal{U}$.
			Let $b\in B$ be a bet yielding $x_i$ on $\Omega \setminus \{k_i \}$ and $0$ otherwise.
			\[
			\mathcal{E}^{b}=\{\vec{\omega}\in\Omega^{\mathcal{A}}:\omega^{b} \neq k_i \}.
			\]
			Because $B$ is understood, one has, for any $N \in \mathbb{N}$ and $z\in X$, that
			\begin{align*}
			\frac12p +\frac12 \left(\frac{1}{2},\langle N \beta_{i} , z \rangle; \frac{1}{2},\langle N b,z\rangle \right) 
			&\sim \frac12p +\frac12 \left( \frac{1}{2},\langle N \beta_i,N b, z \rangle; \frac{1}{2},\langle z \rangle \right)\\
			&\sim \frac12p +\frac12 \left( \frac{1}{2},\langle N x_i, z \rangle; \frac{1}{2},\langle z \rangle \right).
			\end{align*}
			By above, there is a minimizer satisfying Eqs. (\ref{eq:marg}) and (\ref{pcrpf1}), and
			similar arguments to those establishing Eq. (\ref{pcrpf1}) show the minimizer $\pi_0$ for  $f_p$ satisfying Eqs. (\ref{eq:marg}) and (\ref{pcrpf1}) can be taken to also satisfy
			\[
			\pi_{0}\left(  \mathcal{E}^{i}\bigcap\mathcal{E}^{b}\right)
			=\pi_{0}\left(  \mathcal{E}^{j}\bigcap\mathcal{E}^{b}\right)  =0.
			\]
			Picking  $N \in \mathbb{N}$ and $z\in X$ such that $u(z+Nx^{\prime})\neq u(z)$, one also has that
			\[
			\left[  \pi_{0}(\mathcal{E}^{i})+\pi_{0}(\mathcal{E}^{b})\right]
			(u\left(  N x_i+z\right)  -u\left(  z\right)  )= u\left(  N x_i+z\right)  -u\left(  z\right)
			\]
			and so
			\[
			\pi_{0}(\mathcal{E}^{i})+\pi_{0}(\mathcal{E}^{b})=1.
			\]
			The inclusion-exclusion formula gives that
			\[
			1\geq\pi_{0}\left(  \mathcal{E}^{i}\bigcup\mathcal{E}^{j}\bigcup\mathcal{E}^{b}\right)  =1+\pi_{0}(\mathcal{E}^{j}
			)-\pi_{0}\left(  \mathcal{E}^{i}\bigcap\mathcal{E}^{j}\right)
			\]
			and thus $\pi_{0}(\mathcal{E}^{j})=\pi_{0}\left(  \mathcal{E}%
			^{i}\bigcap\mathcal{E}^{j}\right)  $. 
			A symmetric	argument with $b^{\prime}$ defined using $x_j$ instead of $x_i$ shows $\pi_{0}(\mathcal{E}^{i})=\pi_{0}\left(  \mathcal{E}%
			^{i}\bigcap\mathcal{E}^{j}\right)  $.
			Inductively extending as above yields a minimizing $\pi_0$ satisfying Eq. (\ref{pcrpf2}).
		\end{proof}
		
		\begin{lem}
			\label{lem: betreplace} Given any $\varepsilon>0$ and profile $F=\langle
			a_{i}\rangle_{i=1}^{n}$ and allocation $C:\mathcal{A} \rightarrow \mathcal{U} $, there exist $\beta_{1},...,\beta_{T}\in\mathcal{A}$,
			$B_{1},...,B_{T}\in\mathcal{U}$ and $N_{1},...,N_{T}\in\mathbb{N}_{+}$ such
			that:
			\newline(i) for any $B_{j}$, $j=1,....,T$, there exists $a_i$ such that $C(a_i)=B_{j}$;
			\newline(ii) for any	$j=1,....,T$, $\beta_{j}=\beta_{x}^{B_{j},k}$ for some $k\in
			\{1,...,K\}$ and $x\in\Theta_{\varepsilon}$;
			\newline(iii) For any $C \in \mathcal{U}$ and all $\omega\in\Omega$,%
			\[
			\sum_{\{j:B_{j}=C\}}N_{j}\beta_{j}(\omega)=\sum_{\{i:C(a_i)=C\}}a_{i}(\omega).
			\]
		\end{lem}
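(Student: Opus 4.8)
The plan is to reduce the statement to a coordinatewise scalar decomposition, exploiting that $\Omega=\{1,\dots,K\}$ is finite and that each elementary bet $\beta_x^{B,k}$ is supported on a single state within a single class. Writing $g_B(\omega)=\sum_{\{i:C(a_i)=B\}}a_i(\omega)$ for the right-hand side of (iii) (I use $B$ for a generic class of $\mathcal{U}$ to avoid clashing with the allocation map $C$), the target is to reproduce each $g_B$ using positive-integer multiples of small bets drawn from $B$. Since $N\beta_x^{B,k}$ contributes $Nx$ in state $k$ and $0$ in every other state, and since bets drawn from distinct classes occupy distinct coordinates of the product space, requirement (iii) decouples entirely: it suffices, for each class $B$ and each state $k\in\{1,\dots,K\}$, to express the real number $g_B(k)$ as a finite sum $\sum_m N_m x_m$ with each $N_m\in\mathbb{N}_+$ and each $x_m\in\Theta_\varepsilon$.

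First I would dispose of the trivial contributions. If $\{i:C(a_i)=B\}=\emptyset$ then $g_B\equiv 0$ and $B$ contributes no bets at all; likewise, whenever $g_B(k)=0$ I include no bet supported on state $k$ in class $B$. This arrangement automatically secures requirement (i): a class $B$ appears among the $B_j$ only when some $g_B(k)\neq 0$, which forces $\{i:C(a_i)=B\}\neq\emptyset$ (an empty sum being $0$), i.e. only when $B$ actually contains an action of the profile assigned to it.

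The remaining case is $g_B(k)\neq 0$, where the sole content is to meet the integrality and smallness constraints at once. Given $r=g_B(k)\neq 0$, I choose $N=\lfloor |r|/\varepsilon\rfloor+1$, so that $x:=r/N$ satisfies $0<|x|<\varepsilon$, hence $x\in\Theta_\varepsilon$, while $Nx=r$. Because $B$ is rich there is an action $\beta_x^{B,k}\in B$ paying $x$ in state $k$ and $0$ elsewhere, and I take it with multiplicity $N$. Collecting these triples over all classes $B$ and states $k$ with $g_B(k)\neq 0$ produces $\beta_1,\dots,\beta_T$, $B_1,\dots,B_T$, and $N_1,\dots,N_T$. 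Requirement (ii) holds by construction, and (iii) holds coordinatewise: for each class $B$ and state $\omega=k$ one has $\sum_{\{j:B_j=B\}}N_j\beta_j(k)=Nx=g_B(k)$ when $g_B(k)\neq 0$, and both sides vanish otherwise.

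The construction is essentially mechanical, so I do not expect a genuine obstacle; the one point requiring a moment's care is the simultaneous satisfaction of the two constraints on the bets — whole-number multiplicity, which is forced because a profile is a multiset of actions and so only integer copies may be taken, and payoff in $\Theta_\varepsilon$, which is needed downstream so that the nonlinearity of $u$ exploited in Lemma \ref{LemmaSmallBets} can be invoked. Dividing $r$ by an integer $N$ taken just large enough resolves both at once.
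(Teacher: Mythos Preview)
Your proof is correct. The paper does not spell out its own argument here but simply refers to Lemma~5 of EP; your coordinatewise decomposition---dividing each nonzero value $g_B(k)$ by an integer just large enough to bring the quotient into $\Theta_\varepsilon$ and invoking richness of $B$ to realize the resulting elementary bet---is the natural direct construction and is essentially what that reference does.
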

		The proof of Lemma \ref{lem: betreplace} follows the same arguments as Lemma 5 of EP.
		
		Fix $\varepsilon$ as per Lemma \ref{LemmaSmallBets}. 
		Denote the profile $F^{\beta,C}$ for $\varepsilon$.
		For any $p=(p_i,F_i)_{i=1}^n$ and allocation $C:\mathcal{A} \rightarrow \mathcal{U} $, an induction proof using that each $C(a_i)$ is understood gives that \[p \sim (p_i,F^{\beta,C}_i)_{i=1}^n\equiv p^{\beta,C}.\]
		
		Now, to conclude, pick any $p,q\in \Delta \mathcal{F}$ and any allocations  $C_1,C_2:\mathcal{A} \rightarrow \mathcal{U} $ satisfying $f^{C_1}_{p} \geq f^{C_2}_{q}$.
		By Lemma \ref{LemmaSmallBets}, $I(f_{p^\beta})=\int f_{p^\beta} d\pi_0$ for some $\pi_0 \in \Pi$ satisfying Eqs. (\ref{eq:marg}), (\ref{pcrpf1}) and (\ref{pcrpf2}) for all the bets in the profiles in the supports of in $p^{\beta,C_1},q^{\beta,C_2}$.
		As in the proof of EP, only states that correspond to $f_{p^{\beta,C_1}}$ equaling  $f^{C_1}_{p}(\vec{\omega})$ and $f_{q^{\beta,C_2}}$ equaling $f^{C_2}_{q}(\vec{\omega})$ are possible according to $\pi_0$. Hence,  \[I(f_{p})=I(f_{p^{\beta,C_1}})=\int f_{p^{\beta,C_1}} d\pi_0  \geq  \int f_{q^{\beta,C_2}} d\pi_0  \geq  I(f_{q^{\beta,C_2}})= I(f_{q}),\]
		which implies that $p \succsim q$.
	\end{proof}
	
		\begin{proof}[Proof of Proposition \ref{prop: pref_simple}.]
		Necessity is trivial. Suppose $\succsim$ has a rich CCR $\left(u,\mu,\mathcal{U},\Pi\right)$ where $u$ is not a polynomial and $\mathcal{U}$ is finite.
		Since $u$ is not a polynomial, there exists $x,y,z$ so  that $u(x+z)+u(y+z)-u(x+y+z)-u(z) \neq 0$. To save notation, set $z=0$; adding $z$ to each of the profiles in the lotteries compared below covers the case where $z \neq 0$.
		To save notation, write $EF$ instead of $E\bigcap F$ for events $E,F \in \Sigma^{\mathcal{U}}$.
		
		First consider $K=u(x)+u(y)-u(x+y)-u(0) < 0$. 
		For $E \in \Sigma$, let $a_E^i=xE0 \in C_i$ and $b_E^j=0Ey \in C_j$ and $a^i_E+b^j_E =xEy\in C_1$.
		Then $V(a_E+b_E)=\mu(E) u(x)+\mu(E^c)u(y)$ and 
		there is $\pi \in \Pi$ so that 
		\begin{align*}
		V(\langle a_E^1,b_E^2 \rangle)=
		& [\pi(E_{C_1})-\pi (E_{C_1} E^c_{C_2})] u(x)+
		[\pi((E^c)_{C_2})-\pi (E^c_{C_1} E_{C_2})] u(y)+\\
		&[\pi (E_{C_1} E^c_{C_2})] u(x+y)+
		[\pi (E^c_{C_1} E_{C_2})] u(0)\\
		=   & \mu(E)u(x)+\mu(E^c)u(y)-\pi (E^c_{C_1} E_{C_2})K
		\end{align*}
		since $\pi(E_{C_1})=\pi(E_{C_2})=\mu(E)$, $\pi(E^c_{C_1})=\pi(E^c_{C_2})=\mu(E^c)$, and $$\pi (E_{C_1} E^c_{C_2})=\pi (E^c_{C_1} E_{C_2}) =\pi(E_{C_i})-\pi(E_{C_1}E_{C_2}).$$
		UM implies $V(a+b)\geq V(\langle a,b \rangle)$, which holds only if $\pi (E^c_{C_1} E_{C_2})= 0$ because $K<0$.
		Similarly, for $E_1,\dots,E_n \in \Sigma$ and $j_1,\dots,j_n,k_1,\dots,k_n \in \mathcal{U}$ so that $k_i \neq j_i$,
		there exists $\pi \in \Pi$ so that
		$$V \left( \left( \frac{1}{n}, \langle  a^{j_i}_{E_i}+b^{k_i}_{E_i} \rangle \right)_{i=1}^n \right)-V \left(\left( \frac{1}{n}, \langle a^{j_i}_{E_i},b^{k_i}_{E_i}\rangle \right)_{i=1}^n \right)=
		-\frac{1}{n}\sum_{i=1}^n  \pi \left( (E_i)_{C_{j_i}} (E_i^c)_{C_{k_i}} \right)K .$$
		UM holds only if $\pi \left( (E_i)_{C_{j_i}} (E_i^c)_{C_{k_i}} \right)=0$ for each $i$.
		Choosing events and indexes appropriately establishes the result.
		
		If instead $K=u(x)+u(y)-u(x+y)-u(0) > 0$ repeat instead with $c^i_E=yE0 \in C_i$ and $a^i_E+c^j_E=(x+y)E0 \in C_1$ replacing $b^i_E$, noting
		$V(a^i_E+c^j_E)=\mu(E) u(x+y)+\mu(E^c)u(0)$.
		and there is $\pi' \in \Pi$ so that 
		\begin{align*}
		V(\langle a^1_E,c^2_E \rangle)=& 
		[\pi'(E_{C_1})-\pi' (E_{C_1} E^c_{C_2})] u(x+y)+
		[\pi'((E^c)_{C_2})-\pi' (E^c_{C_1} E_{C_2})] u(0)+\\
		&[\pi' (E_{C_1} E^c_{C_2})] u(x)+
		[\pi' (E^c_{C_1} E_{C_2})] u(y)\\
		=&\mu(E)u(x+y)+\mu(E^c)u(0)+\pi' (E^c_{C_1} E_{C_2})K
		\end{align*}
		and UM again requires $\pi (E^c_{C_1} E_{C_2})=0$.
	\end{proof}
		\begin{proof}[Proof of Proposition \ref{prop: independent benchmark}]
		
		By risk aversion, $u(x)+u(y)-u(x+y)-u(0)\equiv K> 0$.
		Fix two understanding classes $C_1,C_2$, and consider bets $a^{j}_{E_i} \in C_j$ so that $a^j_{E_i}(\omega)=xE_i0(\omega)$ and $c^{j}_{F_i}\in C_j$ s.t. $c^{j}_{F_i}(\omega)=yF_i0(\omega)$ and lottery
		\[p_i=\left(\mu(E_i)\mu(F_i),x+y;\mu(E_i)(1-\mu(F_i)),x;(1-\mu(E_i))\mu(F_i),y;(1-\mu(E_i))(1-\mu(F_i)),0 \right).\]
		The bets $a^{1}_{E_i} ,c^{2}_{F_i}$ are potentially misperceived:
		$C^1,C^2$ are rich and understood but $C^1 \bigcup C^2$ is not, since $\mathcal{U}$ is the coarsest correlation cover. 
		Consider the lottery $$q_i=\frac13 \langle a^{C_1}_{E_i},c^{C_2}_{F_i} \rangle+ \frac23\left( \frac12 \langle a^{C_1}_{E^c_i} \rangle+\frac12 \langle c^{C_2}_{F^c_i} \rangle \right)$$
		and note that $f_{q_i} \in B_0(\Omega^\mathcal{U},\Sigma^{\mathcal{U}})$, as defined in the proof of Theorem \ref{thm:CC_rep}, has 
		$$
		f_{q_i}(\vec{\omega})=\frac13 \left( u(x)+u(y) +K\chi_{(E_i)_{C_1} (F_i)_{C_2}}(\vec{\omega}) \right)$$
		where $\chi_E$ is the indicator function of the set $E$.
		Pick $n=(\# \Omega)^2$ such lotteries $q_1,\dots,q_n$ for which the span of $\{f_{q_1},\dots,f_{q_n}\}$ equals all of the $\Omega^{C_1,C_2}$-measurable functions in $B_0(\Omega^\mathcal{U},\Sigma^{\mathcal{U}})$.
		
		Now suppose  $\mu^2 \notin \Pi$. 
		By the separating hyperplane theorem, there exists $g \in  B_0(\Omega^\mathcal{U},\Sigma^{\mathcal{U}})$ so that $\int g d\mu^2 < 0 \leq \int g d\pi$ for all $\pi \in \Pi$; in particular, it is strictly less than $\min_{\pi \in \Pi} \int g d\pi$.
		Rescale $g$ by a positive affine transformation so it belongs to $co \{f_{q_1},\dots,f_{q_n}\}$. Then there are $\alpha_1,\dots,\alpha_n \geq 0 $ so that $g=\sum \alpha_i f_{q_i}$  and $\sum \alpha_i =1$.
		
		By construction, 
		\[
		    q'_i \equiv \frac12 \langle a^{C_1}_{E^c_i} \rangle+\frac12 \langle c^{C_2}_{F^c_i} \rangle  \sim \left( \frac12\mu(E^c_i),x;\frac12 \mu(F^c_i),y;1-\frac12(\mu(E^c_i)+\mu(F^c_i)),0 \right)\equiv p'_i
		\]
		Let $p''_i = \frac13 p_i + \frac23 p'_i$, observing that
		\[
		V(\sum \alpha_i p''_i)=\int g d\mu^2\] and 
		$$V(\sum \alpha_i q_i)=\min_{\pi \in \Pi} \int g d\pi$$ 
		so $\sum \alpha_i q_i \succ \sum \alpha_i p_i$. 
        Applying Lemma \ref{lem:action_IA}, 
        \[
        \sum \alpha_i (\frac13 p_i +\frac23 p'_i)=\frac13\sum \alpha_i p_i +\frac23\sum \alpha_i p'_i\succ \frac13\sum \alpha_i \langle a^{C_1}_{E_i},c^{C_2}_{F_i} \rangle +\frac23\sum \alpha_i q'_i=\sum \alpha_i q_i  \]
		if and only if \[
		    \sum \alpha_i p_i \succ \sum \alpha_i \langle a^{C_1}_{E_i},c^{C_2}_{F_i} \rangle, \]
		which contradicts default to independence.
%
%
%
	\end{proof}

	\bibliographystyle{apa}
	\bibliography{correlation_concern}

\begin{thebibliography}{}

\bibitem[\protect\astroncite{Aliprantis and Border}{2006}]{aliprantisborder06}
Aliprantis, C.~D. and Border, K.~C. (2006).
\newblock {\em Infinite dimensional analysis: A hitchhiker's guide (Third
  ed.)}.
\newblock Springer, Berlin.

\bibitem[\protect\astroncite{Anscombe and Aumann}{1963}]{AnscombeAumann1963}
Anscombe, F.~J. and Aumann, R.~J. (1963).
\newblock A definition of subjective probability.
\newblock {\em The Annals of Mathematical Statistics}, 34(1):pp. 199--205.

\bibitem[\protect\astroncite{Carroll}{2017}]{Carroll2017}
Carroll, G. (2017).
\newblock Robustness and separation in multidimensional screening.
\newblock {\em Econometrica}, 85(2):453--488.

\bibitem[\protect\astroncite{Cerreia-Vioglio
  et~al.}{2015}]{Cerreia-Vioglioetal2015}
Cerreia-Vioglio, S., Dillenberger, D., and Ortoleva, P. (2015).
\newblock Cautious expected utility and the certainty effect.
\newblock {\em Econometrica}, 83(2):693--728.

\bibitem[\protect\astroncite{Dillenberger}{2010}]{Dillenberger2010}
Dillenberger, D. (2010).
\newblock Preferences for one-shot resolution of uncertainty and allais-type
  behavior.
\newblock {\em Econometrica}, 78(6):1973--2004.

\bibitem[\protect\astroncite{Dow and Werlang}{1992}]{DowWerlang1992}
Dow, J. and Werlang, S. (1992).
\newblock Uncertainty aversion, risk aversion, and the optimal choice of
  portfolio.
\newblock {\em Econometrica}, 60(1):197--204.

\bibitem[\protect\astroncite{Ellis and Piccione}{2017}]{EllisPiccione2017}
Ellis, A. and Piccione, M. (2017).
\newblock Correlation misperception in choice.
\newblock {\em American Economic Review}, 107(4):1264--92.

\bibitem[\protect\astroncite{Enke and Zimmerman}{2018}]{enkezimmerman2013}
Enke, B. and Zimmerman, F. (2018).
\newblock Correlation neglect in belief formation.
\newblock {\em Review of Economic Studies}.

\bibitem[\protect\astroncite{Epstein and Wang}{1994}]{EpsteinWang1994}
Epstein, L. and Wang, T. (1994).
\newblock Intertemporal asset pricing under knightian uncertainty.
\newblock {\em Econometrica}, 62(2):283--322.

\bibitem[\protect\astroncite{Epstein and Halevy}{2018}]{EpsteinHalevy2018}
Epstein, L.~G. and Halevy, Y. (2018).
\newblock Ambiguous correlation.
\newblock {\em The Review of Economic Studies}, Forthcoming.

\bibitem[\protect\astroncite{Epstein and Seo}{2010}]{EpsteinSeo2010}
Epstein, L.~G. and Seo, K. (2010).
\newblock Symmetry of evidence without evidence of symmetry.
\newblock {\em Theoretical Economics}, 5(3):313--368.

\bibitem[\protect\astroncite{Epstein and Seo}{2015}]{EpsteinSeo2015}
Epstein, L.~G. and Seo, K. (2015).
\newblock Exchangeable capacities, parameters and incomplete theories.
\newblock {\em Journal of Economic Theory}, 157(C):879--917.

\bibitem[\protect\astroncite{Eyster and
  Weizs{\"a}cker}{2010}]{EysterWeizsacker2010}
Eyster, E. and Weizs{\"a}cker, G. (2010).
\newblock Correlation neglect in financial decision making.
\newblock {\em Mimeo}.

\bibitem[\protect\astroncite{Fleckenstein et~al.}{2014}]{Fleckensteinetal2014}
Fleckenstein, M., Longstaff, F.~A., and Lustig, H. (2014).
\newblock The tips-treasury bond puzzle.
\newblock {\em The Journal of Finance}, 69(5):2151--2197.

\bibitem[\protect\astroncite{French and Poterba}{1991}]{FrenchPoterba1991}
French, K.~R. and Poterba, J.~M. (1991).
\newblock Investor diversification and international equity markets.
\newblock {\em The American Economic Review}, 81(2):222--226.

\bibitem[\protect\astroncite{Gilboa et~al.}{2010}]{GMMS2010}
Gilboa, I., Maccheroni, F., Marinacci, M., and Schmeidler, D. (2010).
\newblock Objective and subjective rationality in a multiple prior model.
\newblock {\em Econometrica}, 78(2):755--770.

\bibitem[\protect\astroncite{Gilboa and
  Schmeidler}{1989}]{GilboaSchmeidler1989}
Gilboa, I. and Schmeidler, D. (1989).
\newblock Maxmin expected utility with non-unique prior.
\newblock {\em Journal of Mathematical Economics}, 18(2):141--153.

\bibitem[\protect\astroncite{Grandmont}{1972}]{Grandmont1972}
Grandmont, J.-M. (1972).
\newblock Continuity properties of {von Neumann-Morgenstern} utility.
\newblock {\em Journal of Economic Theory}, 4(1):45--57.

\bibitem[\protect\astroncite{Heo}{2020}]{heo2020}
Heo, Y. (2020).
\newblock Uncertainty aversion with multiple issues.
\newblock {\em Mimeo}.

\bibitem[\protect\astroncite{Hossain and Okui}{2018}]{HossainOkui2018}
Hossain, T. and Okui, R. (2018).
\newblock Belief formation under signal correlation.
\newblock {\em Unpublished}.

\bibitem[\protect\astroncite{Huang et~al.}{2017}]{huang2017limited}
Huang, H.~H., Zhang, S., and Zhu, W. (2017).
\newblock Limited participation under ambiguity of correlation.
\newblock {\em Journal of Financial Markets}, 32:97--143.

\bibitem[\protect\astroncite{Jiang and Tian}{2016}]{jiang2016correlation}
Jiang, J. and Tian, W. (2016).
\newblock Correlation uncertainty, heterogeneous beliefs and asset prices.
\newblock {\em mimeo}.

\bibitem[\protect\astroncite{Laohakunakorn
  et~al.}{2019}]{Laohakunakornetal2018}
Laohakunakorn, K., Levy, G., and Razin, R. (2019).
\newblock Private and common value auctions with ambiguity over correlation.
\newblock {\em Journal of Economic Theory}, 184.

\bibitem[\protect\astroncite{Levy and Razin}{2018}]{LevyRazin2015amb}
Levy, G. and Razin, R. (2018).
\newblock Combining forecasts in the presence of ambiguity over correlation
  structures.
\newblock {\em mimeo}.

\bibitem[\protect\astroncite{Li}{2017}]{Li2017}
Li, S. (2017).
\newblock Obviously strategy-proof mechanisms.
\newblock {\em American Economic Review}, 107(11):3257--87.

\bibitem[\protect\astroncite{Liu and Zeng}{2017}]{liu2017correlation}
Liu, J. and Zeng, X. (2017).
\newblock Correlation ambiguity and under-diversification.
\newblock {\em mimeo}.

\bibitem[\protect\astroncite{Mankiw and Zeldes}{1991}]{MankiwZeldes1991}
Mankiw, N.~G. and Zeldes, S.~P. (1991).
\newblock The consumption of stockholders and nonstockholders.
\newblock {\em Journal of Financial Economics}, pages 97--111.

\bibitem[\protect\astroncite{Rubinstein and
  Salant}{2015}]{SalantRubinstein2015}
Rubinstein, A. and Salant, Y. (2015).
\newblock ``{T}hey do what {I} do'': Positive correlation in ex-post beliefs.
\newblock {\em Mimeo}.

\bibitem[\protect\astroncite{Savage}{1954}]{Savage1954}
Savage, L.~J. (1954).
\newblock {\em The Foundations of Statistics}.
\newblock Dover.

\bibitem[\protect\astroncite{Shapley and Baucells}{1998}]{shapleybaucells98}
Shapley, L.~S. and Baucells, M. (1998).
\newblock {Multiperson Utility}.
\newblock {\em UCLA Economics Working Papers}, (779).

\end{thebibliography}
\end{document}